





\documentclass[sigconf, nonacm]{aamas}


\usepackage{balance} 
\usepackage[linesnumbered,ruled,vlined]{algorithm2e}
\usepackage{subfigure}
\usepackage{appendix}

\usepackage{ifthen}
\newboolean{include_appendix}

\setboolean{include_appendix}{true}

\theoremstyle{boldhead}
\newtheorem{assumption}{Assumption}
\newtheorem{theorem}{Theorem}
\newtheorem{definition}{Definition}
\newtheorem{corollary}{Corollary}
\newtheorem{lemma}{Lemma}



\makeatletter
\gdef\@copyrightpermission{
  \begin{minipage}{0.2\columnwidth}
   \href{https://creativecommons.org/licenses/by/4.0/}{\includegraphics[width=0.90\textwidth]{by}}
  \end{minipage}\hfill
  \begin{minipage}{0.8\columnwidth}
   \href{https://creativecommons.org/licenses/by/4.0/}{This work is licensed under a Creative Commons Attribution International 4.0 License.}
  \end{minipage}
  \vspace{5pt}
}
\makeatother

\setcopyright{ifaamas}
\acmConference[AAMAS '25]{Proc.\@ of the 24th International Conference
on Autonomous Agents and Multiagent Systems (AAMAS 2025)}{May 19 -- 3, 2025}
{Detroit, Michigan, USA}{Y.~Vorobeychik, S.~Das, A.~Nowé  (eds.)}
\copyrightyear{2025}
\acmYear{2025}
\acmDOI{}
\acmPrice{}
\acmISBN{}



\acmSubmissionID{<<OpenReview submission id>>}


\title[AAMAS-2025 Formatting Instructions]{Adaptive Episode Length Adjustment for Multi-agent Reinforcement Learning}



\author{Byunghyun Yoo}
\affiliation{
  \institution{Electronics and Telecommunications Research Institute (ETRI)}
  \city{Daejeon}
  \country{South Korea}}
\email{bhyoo@etri.re.kr}

\author{Younghwan Shin}
\affiliation{
  \institution{Electronics and Telecommunications Research Institute (ETRI)}
  \city{Daejeon}
  \country{South Korea}}
\email{shinyh1115@etri.re.kr}

\author{Hyunwoo Kim}
\affiliation{
  \institution{Electronics and Telecommunications Research Institute (ETRI)}
  \city{Daejeon}
  \country{South Korea}}
\email{kimhw@etri.re.kr}

\author{Euisok Chung}
\affiliation{
  \institution{Electronics and Telecommunications Research Institute (ETRI)}
  \city{Daejeon}
  \country{South Korea}}
\email{eschung@etri.re.kr}

\author{Jeongmin Yang}
\affiliation{
  \institution{Electronics and Telecommunications Research Institute (ETRI)}
  \city{Daejeon}
  \country{South Korea}}
\email{jmyang38@etri.re.kr}




\begin{abstract}
In standard reinforcement learning, an episode is defined as a sequence of interactions between agents and the environment, which terminates upon reaching a terminal state or a pre-defined episode length. 
Setting a shorter episode length enables the generation of multiple episodes with the same number of data samples, thereby facilitating an exploration of diverse states.
While shorter episodes may limit the collection of long-term interactions, they may offer significant advantages when properly managed. For example, trajectory truncation in single-agent reinforcement learning has shown how the benefits of shorter episodes can be leveraged despite the trade-off of reduced long-term interaction experiences. However, this approach remains underexplored in MARL.
This paper proposes a novel MARL approach, Adaptive Episode Length Adjustment (AELA), where the episode length is initially limited and gradually increased based on an entropy-based assessment of learning progress. 
By starting with shorter episodes, agents can focus on learning effective strategies for initial states and minimize time spent in dead-end states.
The use of entropy as an assessment metric prevents premature convergence to suboptimal policies and ensures balanced training over varying episode lengths.
We validate our approach using the StarCraft Multi-agent Challenge (SMAC) and a modified predator-prey environment, demonstrating significant improvements in both convergence speed and overall performance compared to existing methods. To the best of our knowledge, this is the first study to adaptively adjust episode length in MARL based on learning progress.
\end{abstract}



\keywords{Multi-agent reinforcement learning, Episode length adjustment, Dead-end states}


         
\newcommand{\BibTeX}{\rm B\kern-.05em{\sc i\kern-.025em b}\kern-.08em\TeX}


\begin{document}


\pagestyle{fancy}
\fancyhead{}


\maketitle 


\section{Introduction}

Multi-agent reinforcement learning (MARL) has become an increasingly important area of research, especially in scenarios involving cooperative or competitive environments with multiple autonomous agents~\cite{dinneweth2022multi, huttenrauch2017guided, ye2015multi}. The ability of agents to make decisions based on their individual perspectives while interacting with other agents presents a complex yet realistic setting that resembles many natural and artificial systems. As MARL evolves, researchers have identified several key challenges, such as non-stationarity, coordination among agents, and the curse of dimensionality. Among these challenges, the impact of episode length or time limits on the learning performance of agents is of particular interest, especially in practical settings where time constraints are inherent.

In MARL, an episode is defined as a sequence of interactions between agents and the environment, which terminates upon reaching a terminal state or a pre-defined episode length. Typically, the episode length is set to a fixed value throughout the training process, and this value is sufficiently large to allow for finding an optimal policy to solve the given ask~\cite{sutton2018reinforcement}. 
Determining an appropriate episode length is not straightforward. 
Setting a shorter episode length enables the generation of multiple episodes with the same number of data samples, thereby facilitating an exploration of diverse states. This increased exploration can help agents gather a wider variety of experiences, which is beneficial for generalization. 
However, it also has the drawback of potentially failing to collect experiences involving long-term interactions that exceed the restricted episode length.

Recent research in single-agent reinforcement learning has increasingly focused on techniques such as time limits or trajectory truncation, which utilize only a portion of an episode~\cite{pardo2018time, fuks2019evolution, poiani2023truncating, poiani2024truncating}. These methods have been demonstrated to be beneficial both theoretically and empirically. Adjusting episode length is expected to be even more effective in MARL, where the state-action space is significantly larger.
Especially in multi-agent environments, the complexity arising from interactions between agents can lead to the existence of various types of dead-end states where agents cannot reach a desired goal within a reasonable number of steps.
Dead-end states can severely hinder the learning process, especially if agents frequently encounter them in early training stages~\cite{fatemi2019dead, zhang2023safe, killian2023risk}. 
By adjusting episode length, agents can avoid spending excessive time in such unproductive states, focusing instead on learning effective strategies for early interactions. 
Despite the potential of these approaches, their application in MARL remains relatively underexplored.

In this paper, we propose a novel approach to MARL called Adaptive Episode Length Approach (AELA), where the episode length is initially limited during the early training stages and then gradually increased based on the learning situation.
AELA begins by restricting agents to shorter episodes, allowing them to focus on learning effective strategies for initial states and to reduce a prolonged time in dead-end states.
Then, our approach gradually increases the episode length as learning progresses, based on an entropy-based assessment of the agents. 
Entropy is used to prevent excessive convergence to experiences of a specific episode length, thereby ensuring adequate training over the originally defined episode length for the task.
Also, this staged increase in episode length allows agents to incrementally expand their exploration without being overwhelmed by the complexity of the full task from the outset.

We validated our proposed method through experiments conducted on the StarCraft Multi-agent Challenge (SMAC)\cite{samvelyan2019starcraft} and modified predator-prey (MPP)~\cite{son2019qtran}, which are widely used benchmarks for testing MARL algorithms. Our extensive evaluation reveals that our approach achieves significantly superior performance, particularly in terms of performance and convergence speed. The results demonstrate that limiting the episode length during the early training stages helps agents develop a stronger understanding of the initial state, ultimately improving final performance. To the best of our knowledge, this is the first study to adaptively adjust the episode length based on the learning situation in MARL.

\section{Related work}
\paragraph{Cooperative multi-agent reinforcement learning}

Methods that decompose joint action values have been extensively studied and have shown notable success in cooperative multi-agent reinforcement learning (MARL) scenarios. The Value Decomposition Network (VDN), introduced by ~\citet{sunehag2018value}, models the joint action-value function as the sum of individual action-value functions, which promotes cooperative behavior among agents. QMIX, proposed by~\citet{rashid2018qmix}, expanded on VDN by incorporating a mixing network that used monotonic utility functions under Individual-Global-Max (IGM) conditions, enhancing the flexibility of value decomposition while preserving the monotonicity constraint. To overcome the limitations of the monotonicity assumption, later approaches such as QTRAN~\cite{son2019qtran} and Weighted QMIX (WQMIX)~\cite{rashid2020weighted} were developed to relax these constraints by applying weights to the loss functions of suboptimal actions, thereby improving performance. Furthermore, QPLEX~\cite{wang2020qplex} employed a duplex dueling architecture to encode the IGM condition via a neural network, resulting in a more sophisticated value factorization framework. Recent years have also seen increased interest in extending ideas from single-agent reinforcement learning, including distributional RL and risk-sensitive RL, to multi-agent environments~\cite{qiu2021rmix, son2022disentangling, sun2021dfac, shen2023riskq}. Additionally, role-based learning methods have been proposed to effectively decompose tasks in multi-agent systems~\cite{wang2020roma, wang2021rode, zeng2023effective, liu2022rogc}, and the concept of bidirectional dependencies has been explored to enhance cooperative behaviors among agents ~\cite{li2023ace}.

\paragraph{Episode length adjustment}

The problem of efficiently managing episode lengths and their impact on reinforcement learning (RL) has gained increasing attention in recent years. Various approaches have been proposed to address issues related to time limits and trajectory truncation. 
\citet{pardo2018time} explored time limits in RL and introduced methods for handling them effectively, showing that including the remaining time as part of the agent's input improves policy learning by avoiding state aliasing and instability. 
The advantages of truncating trajectories for enhancing Policy Optimization via Importance Sampling (POIS) accuracy were illustrated by~\cite{poiani2023truncating}.
\citet{poiani2024truncating} proposed an adaptive trajectory truncation strategy called RIDO, which allocates interaction budgets dynamically to minimize variance in Monte Carlo policy evaluation. Their results indicate that adaptive schedules outperform fixed-length approaches across multiple RL domains.
In addition to directly adjusting the episode length, there have also been studies addressing uncertain episode lengths.
\citet{mandal2023online} tackled online RL with uncertain episode lengths, proposing an algorithm that adapts to variable horizons by estimating the underlying distribution. They provided theoretical guarantees that highlight the effectiveness of adaptive horizon management in minimizing regret.
Horizon Regularized Advantage (HRA) was proposed by~\cite{nafi2024policy} to improve the generalization of RL. 
By averaging advantage estimates from multiple discount factors, HRA improves policy generalization, particularly in environments with stochastic episode durations.
As described so far, there have been studies related to episode length adjustment in single-agent environments; however, in multi-agent environments, there has been little related research.

\section{Background}

\subsection{Dec-POMDP}

A decentralized partially observable Markov decision process (Dec-POMDP) is a framework used to model multi-agent decision-making in environments where agents must cooperate under conditions of uncertainty and limited observability. 
In a Dec-POMDP, multiple agents interact with a shared environment, but each agent has only partial information about the global state and must make decisions based on its local observations. Formally, a Dec-POMDP can be represented as a tuple 
\(\langle I, S, A, P, O, Z, R, \gamma \rangle\),
where \( I \) is the set of agents, \( S \) represents the set of global states, \( A \) is the set of joint actions across all agents, \( P \) is the state transition function, \( O \) is the set of joint observations, \( Z \) is the observation function, \( R \) is the reward function, and \( \gamma \) is the discount factor.

In this framework, each agent $i \in I:= {1, \ldots, I}$ selects actions $a_{i} \in A$ at each time step $t$.
The state transition function provides the probability of transitioning from one state to another given the joint action of all agents, as \( P(s'|s,\mathbf{a}): S \times A \times S \longmapsto [0,1] \).
Each agent has individual and partial observations $z \in Z$, according to the observation function $O(s, i): S \times I  \longmapsto Z$.
The reward function \( R(s, \mathbf{a}): S \times A \longmapsto \mathbb{R}\) assigns a numerical value to each state-action pair, and $\gamma \in [0,1)$ is the discount factor.
The action-observation history for agent $i$ is $\tau_{i} \in T := (Z \times \mathbf{A})^{*}$, on which its policy $\pi_{i}(a_{i}|\tau_{i}): T \times \mathbf{A} \longmapsto [0, 1]$ is based.
The formal goal is to find a joint policy $\pi$ that optimizes a joint action-value function, which is represented as 
$Q_{jt}^{\boldsymbol\pi}(s, \mathbf{a}) = \mathbb{E}_{s_{0:\infty}, \mathbf{a}_{0:\infty}}[\sum_{t=0}^{\infty}\gamma^{t}r_{t} | s_{0} = s, \mathbf{a}_{0} = \mathbf{a}, \boldsymbol\pi]$.

\subsection{Episode length adjustment}

In reinforcement learning (RL), the agent interacts with the environment in a sequence of steps called an episode, which often has a fixed length \( E_{L} \). However, adjusting episode length during training can significantly influence learning efficiency and policy performance, particularly in environments where the complexity or difficulty varies over time.

The goal of RL is to learn a policy \( \pi \) that maximizes the expected cumulative reward, also known as the return \( G_l \), which is given by:
\begin{align}
G_l = \sum_{k=0}^{E_{L}-1} \gamma^k r(s_{l+k}, a_{l+k}),
\end{align}
where \( E_{L} \) is the length of the episode and $l$ represents the time step within an episode. 
Note that the term 'time step' may cause some confusion with training steps or other terms, so the time step within an episode is referred to as an interaction step $l$.
$l$ is a natural number ranging from $1$ to the episode length, $E_{L}$.
Traditionally, the episode length $E_{L}$ is a fixed value, but adjusting the episode length means setting it to some value smaller than $E_{L}$.

\section{Method}

\begin{algorithm*}
\caption{Adaptive Episode Length Adjustment (AELA)}
\label{algorithm:aela}
Initialize each agent's policy or Q-value network parameters $\theta_1, \theta_2, \dots, \theta_I$, the centralized critic network parameters $\phi$, an empty replay buffer $\mathcal{D}$, the initial episode length $E_{L_0}$, an empty list $\mathcal{L}_H$ for storing entropy values, the window size $w$ for collecting entropy values, and the batch size $B$

Set the current episode length as the initial episode length $E_{L} \gets E_{L_0}$

Set target network parameters $\bar{\theta}_i \gets \theta_i$ for $i = 1, 2, \dots, I$, and $\bar{\phi} \gets \phi$\

Set target network update interval $T_U$

Set $E_{max}$ to the episode length of the original task

Initialize both the time step $t$ and the previous time step $t_{p}$ to $0$

\While{$t < t_{max}$}{

    Initialize the interaction step $l$ to $0$
    
    Initialize the state as the initial state $s_{0}$

    \While{$l < E_{L}$}{    
        Each agent $i$ takes action $a_{i,l} \sim \pi_{\theta_i}(\cdot | o_{i,l})$\
        
        Step into state $s_{l+1}$\
        
        Receive reward $r_{l}$ and observe $o_{i,l+1}$\
        
        Add transition data to the replay buffer: $\mathcal{D} \gets \mathcal{D} \cup \{(s_{l}, o_{l}, a_{l}, r_{l}, s_{l+1}, o_{l+1})\}$\
        
        $l = l + 1$ , $t = t + 1$
        }
        \If{$|\mathcal{D}| > B$}{
            Sample a random batch of episodes from $\mathcal{D}$\
        
            Calculate the total entropy of q-values $H_{\text{total}}$
    
            Add the entropy $H_{\text{total}}$ to $\mathcal{L}_H$\

            Update the parameters of the centralized critic $\phi$ with the sampled batch
        
            Update each agent's policy or Q-value network parameters $\theta_1, \theta_2, \dots, \theta_I$ with the sampled batch

            \If{$|\mathcal{L}_H| \bmod w = 0$}{
                Fit a linear model to the last $w$ values in $\mathcal{L}_H$\
                
                \If{(slope of linear model $< 0$) \textbf{and} ($E_{L} < E_{max}$)}{
                    $E_{L} \gets E_{L}+1$
                }
            }

            \If{$(t - t_{p}) > T_{U}$}{
                Update the target network of centralized critic $\bar{\phi} \gets \phi$, and target networks of agents $\bar{\theta}_1 \gets \theta_1, \dots, \bar{\theta}_I \gets \theta_I$\
                
                $t_{p} = t$
            }
        }
}
\end{algorithm*}

In reinforcement learning, dead-end states refer to states from which the agent cannot reach the goal state or recover once reached. To efficiently collect experience in reinforcement learning, it is necessary to minimize the visits to dead-end states. In single-agent reinforcement learning, several techniques have been proposed to estimate and avoid such dead-ends~\cite{fatemi2019dead, zhang2023safe, killian2023risk}. 
However, in multi-agent reinforcement learning, the number of states and actions grows exponentially, and the states that lead to successful rewards are relatively few compared to the total number of states and actions. This makes it extremely challenging to directly estimate dead-ends.

Therefore, instead of precisely estimating and completely avoiding dead-ends, this paper focuses on efficiently collecting experience even in the presence of dead-ends. In multi-agent environments where dead-ends exist, effective experience collection often involves visiting states that occur before reaching a dead-end. This strategy is referred to as secure exploration, which emphasizes maximizing the exploration of states that precede dead-ends to ensure meaningful learning opportunities. 

\subsection{Theoretical background}
\label{subsection:thm}
In this section, we theoretically describe the relationship between reducing episode length and visiting secure states. 
First, we define a dead-end state and a secure state as follows:


\begin{definition}[Dead-end State]
Once the agent reaches dead-end states $s_d$ at time step $t$ in a trajectory, no policy $\pi$ can lead it to the goal state $s_g$ for any $t'>t$.

\end{definition}

\begin{definition}[Secure state]
A state that is not a dead-end state is referred to as a secure state $s_{s}$, as there exists at least one policy that guarantees reaching the goal state.

\end{definition}
According to these definitions, the properties of $P_{s}$ with respect to the interaction step can be proven as follows.
\begin{lemma}
Let $E_L$ denote the episode length, and let $l$ (where $l = 1, 2, \dots, E_L$) denote the interaction step (i.e., the time step within each episode). Let $P_s(l)$ be the probability that the state is secure at interaction step $l$. Then, it holds that:
\begin{equation}
P_{s}(l+1) \leq P_{s}(l).
\end{equation}
\label{lemma:mono}
\end{lemma}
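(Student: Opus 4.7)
The plan is to establish a set-inclusion between the events \(\{s_{l+1} \text{ is secure}\}\) and \(\{s_{l} \text{ is secure}\}\), and then pass to probabilities. The key structural fact to isolate is that dead-end states are \emph{absorbing} with respect to the set of dead-ends: from any dead-end state, every successor reached with positive probability is itself a dead-end. Once that is in hand, the lemma becomes essentially immediate, since being secure at step \(l+1\) forces having been secure at step \(l\).

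First I would make the absorbing property precise. Suppose, for contradiction, that \(s_d\) is a dead-end but some action \(a\) yields \(P(s_{l+1}=s' \mid s_l = s_d,\, a_l = a) > 0\) for a secure state \(s'\). By the definition of secure state, there exists a policy \(\pi'\) that reaches the goal \(s_g\) from \(s'\) with positive probability. Concatenating the one-step choice \(a\) at \(s_d\) with \(\pi'\) from \(s'\) onward yields a (possibly non-Markov) policy that reaches \(s_g\) from \(s_d\) with positive probability, contradicting the Dead-end State definition. Hence from any dead-end, all positive-probability successors are also dead-ends.

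Next I would use this to compare the two events. If the state at step \(l\) is a dead-end, then, with probability one under the joint law of the policy and the transition kernel, the state at step \(l+1\) is also a dead-end. Taking the contrapositive along sample paths,
\begin{equation}
\{\omega : s_{l+1}(\omega) \text{ is secure}\} \subseteq \{\omega : s_{l}(\omega) \text{ is secure}\}
\end{equation}
up to a null set. Monotonicity of probability measures then gives \(P_s(l+1) \leq P_s(l)\), which is the claim.

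The main obstacle I anticipate is the rigorous handling of the ``no policy leads to \(s_g\)'' clause in the dead-end definition, which must be read as a quantification over all policies (including non-stationary and history-dependent ones) so that the concatenation argument above is legitimate; if the definition were silently restricted to stationary Markov policies, the one-step construction would need an extra justification that such a hybrid policy can be replaced by a stationary one with the same reachability property. Provided that quantification is interpreted broadly, as the surrounding discussion of dead-ends suggests, no further machinery is required and the argument stays at the level of a few lines.
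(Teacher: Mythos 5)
Your proof is correct, and it rests on the same underlying fact as the paper's — that dead-end states are absorbing, so the secure trajectories at step $l+1$ form a subset of those at step $l$ — but it packages the argument differently. The paper writes the multiplicative survival decomposition $P_s(l) = \prod_{k=1}^{l}(1-P_d(k))$ and observes that appending one more factor in $[0,1]$ cannot increase the product; this is clean but it simply \emph{asserts} that remaining secure through step $l$ requires avoiding dead-ends at every prior step, and it requires reading $P_d(k)$ as a conditional probability (of entering a dead-end at step $k$ given security at step $k-1$) for the product identity to be exact. You instead prove the absorbing property explicitly (via the concatenation-of-policies contradiction), deduce the pathwise event inclusion $\{s_{l+1}\ \text{secure}\}\subseteq\{s_l\ \text{secure}\}$, and finish by monotonicity of the probability measure. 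Your route is the more rigorous of the two: it supplies the missing justification for the paper's product formula and sidesteps the conditional-probability bookkeeping entirely. Your closing caveat about the quantifier in the dead-end definition is well taken, but the paper's Definition 1 ("no policy $\pi$ can lead it to the goal state") is naturally read as ranging over all policies, so the concatenation argument goes through.
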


\begin{proof}[\textbf{Proof}]
Let the probability of falling into a dead-end state at time step \( l \) be denoted by \( P_{d}(l) \). 
To remain in a secure state up to time step \( l \), the agent must avoid falling into a dead-end state at every previous time step. 
Thus, we have:
\begin{align}
P_{s}(l) = \prod_{k=1}^{l} (1 - P_{d}(k)).
\end{align}
To prove that \( P_{s}(l) \) is not increasing with respect to \( l \), we compare \( P_{s}(l) \) and \( P_{s}(l+1) \):
\begin{align}
P_{s}(l+1) = P_{s}(l) \times (1 - P_{d}(l+1)),
\end{align}
where \( 1 - P_{d}(l+1) \) represents the probability of avoiding a dead-end state at time step \( l+1 \). Since \( 0 \leq 1 - P_{d}(l+1) \leq 1 \), it follows that:
\begin{align}
P_{s}(l+1) = P_{s}(l) \times (1 - P_{d}(l+1)) \leq P_{s}(l).
\end{align}
Hence, by comparing \( P_{s}(l) \) and \( P_{s}(l+1) \), it can be concluded that \( P_{s}(l) \) is a non-increasing function of \( l \). 
Therefore, the probability of being in a secure state decreases monotonically or remains constant as the interaction step $l$ increases.
\end{proof}

Based on Lemma \ref{lemma:mono}, we prove that reducing the episode length increases or maintains the probability of being a secure state, thereby enabling efficient experience collection.
The following two theorems build upon Lemma \ref{lemma:mono} to establish the theoretical foundation for our approach. Specifically, Theorem \ref{theorem:increase} provides a theoretical hint that effectively adjusting the episode length can yield benefits (proof in Appendix A). Therefore, in AELA, we implement adaptive control of the episode length based on this insight. Subsequently, Theorem \ref{theorem:regret} elucidates the impact of decreasing the probability of visiting dead-end states and the benefits in terms of regret minimization. These two theorems provide the theoretical foundation for the design of the AELA algorithm.
\begin{theorem}
Let $E_L$ be the fixed episode length, and $N_{s}$ be the expected number of secure state visits over all episodes.
Then, $N_{s}$ remains constant or increases if $E_L$ is reduced.
\label{theorem:increase}
\end{theorem}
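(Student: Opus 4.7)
The plan is to interpret the theorem in the context that motivates the paper: with a fixed total interaction budget $T$ (i.e., a fixed number of environment samples), reducing $E_L$ allows more episodes to be run, namely $T/E_L$ of them. Under this reading, the expected number of secure state visits over all episodes can be written as
\begin{equation}
N_s(E_L) \;=\; \frac{T}{E_L}\sum_{l=1}^{E_L} P_s(l),
\end{equation}
since within a single episode the expected number of secure steps is $\sum_{l=1}^{E_L} P_s(l)$ by linearity of expectation, and episodes are independent.

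The heart of the argument is then to show that the prefix average $\bar P_s(E_L) := \frac{1}{E_L}\sum_{l=1}^{E_L} P_s(l)$ is non-increasing in $E_L$, which would give $N_s(E_L) = T \cdot \bar P_s(E_L)$ non-increasing in $E_L$ and hence the claimed monotonicity. First I would invoke Lemma~\ref{lemma:mono} to assert that $P_s(l)$ is non-increasing in $l$. Next, I would use the elementary fact that for any non-increasing sequence, the average over the first $E_L$ terms is at least the average over the first $E_L+1$ terms; concretely, comparing $\bar P_s(E_L+1)$ to $\bar P_s(E_L)$ reduces to showing $P_s(E_L+1)\le \bar P_s(E_L)$, which follows because every term in $\bar P_s(E_L)$ is at least $P_s(E_L) \ge P_s(E_L+1)$. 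Iterating this step yields the desired monotonicity for any reduction from $E_L$ to a smaller $E_L' < E_L$.

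The main conceptual obstacle, more than any technical one, is fixing the right quantity to hold constant. If instead one held the \emph{number of episodes} fixed, the statement would be false (shorter episodes would strictly reduce $N_s$ by truncating secure visits at the tail). I therefore expect to spend a sentence at the start making explicit that the comparison is under a common sample budget $T$, so that the ``number of episodes'' scales like $1/E_L$. Once that framing is in place, the proof is essentially a one-line consequence of Lemma~\ref{lemma:mono} via the prefix-average monotonicity, and I would close by noting that $N_s(E_L')\ge N_s(E_L)$ for any $E_L' \le E_L$, with equality precisely when $P_s(l)$ is constant on $[1,E_L]$ (e.g., when no dead-end states are reachable in that range).
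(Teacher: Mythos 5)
Your proposal is correct and follows essentially the same route as the paper's proof: both fix the total sample budget so the number of episodes is $N_{\text{total}}/E_L$, write $N_s(E_L)=\frac{N_{\text{total}}}{E_L}\sum_{l=1}^{E_L}P_s(l)$, and use Lemma~\ref{lemma:mono} to show this prefix average is non-increasing in $E_L$ (the paper does this by computing $\Delta N_s$ explicitly, which reduces to the same inequality $E_L P_s(E_L+1)\le\sum_{l=1}^{E_L}P_s(l)$ you use). Your added remark that the statement would fail if the number of episodes rather than the sample budget were held fixed is a helpful clarification that the paper leaves implicit.
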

\begin{definition}[The probability of visiting dead-end states]
\label{definition:p_d}
Let $N_{\text{total}}$ denote the total number of collected data samples.
When \( N_d \) denotes the expected number of dead-end state visits across all episodes, then \( N_d \) can be expressed as \( N_{\text{total}} - N_{s} \). The probability \( P_d \) of visiting dead-end states can be defined as \( \frac{N_d}{N_{\text{total}}} \).
\end{definition}

\begin{definition}[Regret]
\label{definition:regret}
In the context of reinforcement learning, the regret after $T$ time steps, denoted by $\text{Regret}(T)$, measures the difference in cumulative rewards between the optimal policy $\pi^*$ and the agent's policy $\pi$:
\begin{equation}
\text{Regret}(T) = R(\pi^*) - R(\pi),
\end{equation}
where \( R(\pi) = \sum_{t=1}^{T} r_t \) and \( R(\pi^*) = \sum_{t=1}^{T} r_t^* \).
Here, $r_t$ represents the reward obtained by the agent's policy $\pi$ at time step $t$, and $r_t^*$ represents the reward obtained by the optimal policy $\pi^*$ at the same time step.
\end{definition}
\begin{corollary}
\label{corollary:pd_decrease}
According to Theorem \ref{theorem:increase}, reducing the episode length \( E_L \) either increases or preserves the number of visits to secure states \( N_{s} \).
Consequently, reducing \( E_L \) will cause \( P_d \) to either decrease or at least remain unchanged.
\end{corollary}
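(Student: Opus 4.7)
The plan is to derive the corollary by direct substitution into the definition of \(P_d\) and then appeal to Theorem \ref{theorem:increase}. From Definition \ref{definition:p_d}, I can rewrite
\[
P_d \;=\; \frac{N_d}{N_{\text{total}}} \;=\; \frac{N_{\text{total}} - N_s}{N_{\text{total}}} \;=\; 1 - \frac{N_s}{N_{\text{total}}},
\]
so the entire claim reduces to tracking what happens to the ratio \(N_s / N_{\text{total}}\) when \(E_L\) is reduced. Because \(P_d\) is a monotonically decreasing function of \(N_s\) at fixed \(N_{\text{total}}\), any lower bound on the change in \(N_s\) translates immediately into an upper bound on the change in \(P_d\).

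Next, I would invoke Theorem \ref{theorem:increase}, which guarantees that reducing \(E_L\) either leaves \(N_s\) unchanged or strictly increases it. Combined with the fact that \(N_{\text{total}}\) denotes the total number of collected samples and is therefore held fixed when the comparison between two choices of \(E_L\) is made at a common sample budget, this implies that \(N_s / N_{\text{total}}\) is non-decreasing in the reduction of \(E_L\). Consequently \(P_d = 1 - N_s / N_{\text{total}}\) is non-increasing, which is exactly the claim. Writing this out as a two-line chain of (in)equalities would complete the proof.

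The only subtle point, and the part I would be most careful to spell out, is justifying that \(N_{\text{total}}\) should be treated as fixed rather than as a quantity that scales with \(E_L\). I would resolve this by appealing to the convention used throughout the paper: the motivation for shortening \(E_L\) is not to collect fewer transitions but to redistribute a fixed sample budget across more (and shorter) episodes, so by construction \(N_{\text{total}}\) is independent of \(E_L\). Once this is made explicit, the remainder is a one-line algebraic observation, so I do not anticipate any real technical obstacle beyond stating this invariance cleanly.
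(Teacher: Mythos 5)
Your proposal is correct and follows essentially the same route as the paper: rewrite \(P_d = 1 - N_s/N_{\text{total}}\) from Definition \ref{definition:p_d}, invoke Theorem \ref{theorem:increase} for the non-decrease of \(N_s\), and conclude that \(P_d\) is non-increasing. Your explicit remark that \(N_{\text{total}}\) is a fixed sample budget independent of \(E_L\) is a point the paper leaves implicit (though it is already baked into the proof of Theorem \ref{theorem:increase}), and spelling it out is a minor improvement rather than a deviation.
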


\begin{proof}[\textbf{Proof}]

From Theorem \ref{theorem:increase}, we know that reducing \(E_L\) causes \(N_s\) to either increase or remain unchanged.
Given that \( P_d = \frac{N_d}{N_{\text{total}}} = 1 - \frac{N_{s}}{N_{\text{total}}} \) (defined in Definition \ref{definition:p_d}), a non-decreasing change in \( N_s \) (i.e., an increase or no change) results in a non-increasing change in \( P_d \) (i.e., a decrease or no change).


\end{proof}

Based on Corollary \ref{corollary:pd_decrease}, we establish the following Theorem \ref{theorem:regret} (proof is provided in Appendix B). To do so, we introduce an assumption that ensures the agent prioritizes reaching the goal state over accumulating intermediate rewards. 

\begin{assumption}
The reward \( r_g \) obtained at the goal state is sufficiently large compared to the rewards received over any interval of time steps within the episode. Formally, for any \( 1 \leq k \leq l \leq T \),
\begin{align}
\sum_{t=k}^{l} r_t < r_g.
\end{align}
This assumption ensures that the agent prioritizes reaching the goal state over accumulating intermediate rewards.
\label{assumption:goal_reward}
\end{assumption}

\begin{theorem}
Under Assumption \ref{assumption:goal_reward}, as \( P_d \) decreases, the difference in cumulative rewards between the optimal policy and the agent's policy also decreases, thereby reducing Regret(T) (defined in Definition \ref{definition:regret}).
\label{theorem:regret}
\end{theorem}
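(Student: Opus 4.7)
The plan is to decompose the agent's expected cumulative reward according to whether the trajectory enters a dead-end state or remains secure, and then exhibit the regret as a linear function of $P_d$ with a strictly positive slope. Concretely, let $\mathcal{E}_d$ denote the event that the realized trajectory visits a dead-end state (so $\Pr[\mathcal{E}_d] = P_d$ by Definition \ref{definition:p_d}) and let $\mathcal{E}_s$ denote its complement. Conditioning on these two events gives
\begin{equation*}
\mathbb{E}[R(\pi)] = P_d \, \mathbb{E}[R(\pi)\mid\mathcal{E}_d] + (1-P_d)\, \mathbb{E}[R(\pi)\mid\mathcal{E}_s],
\end{equation*}
and plugging this into the regret from Definition \ref{definition:regret} yields
\begin{equation*}
\text{Regret}(T) = (1-P_d)\,\Delta_s + P_d\,\Delta_d,
\end{equation*}
where $\Delta_s := R(\pi^*) - \mathbb{E}[R(\pi)\mid\mathcal{E}_s]$ and $\Delta_d := R(\pi^*) - \mathbb{E}[R(\pi)\mid\mathcal{E}_d]$. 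This reformulation is the core of the argument because it isolates the dependence on $P_d$ from the conditional behavior on each event.

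The next step is to show $\Delta_d > \Delta_s$, which I would argue as follows. Along any trajectory in $\mathcal{E}_d$, Definition \ref{definition:p_d} (via Definition 1) implies the agent cannot reach the goal state after entering the dead-end, so it never collects the goal reward $r_g$. Along trajectories in $\mathcal{E}_s$, the agent remains in secure states from which the goal is still reachable, so with nonzero probability it obtains $r_g$. The optimal policy $\pi^*$ collects $r_g$ by definition. Assumption \ref{assumption:goal_reward} then provides the key lever: because no accumulation of intermediate rewards over any interval can compensate for a missed $r_g$, the shortfall on dead-end trajectories strictly exceeds the shortfall on secure trajectories, giving $\Delta_d - \Delta_s > 0$.

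From here the theorem follows quickly. Differentiating the regret expression with respect to $P_d$ gives
\begin{equation*}
\frac{d\,\text{Regret}(T)}{dP_d} = \Delta_d - \Delta_s > 0,
\end{equation*}
so $\text{Regret}(T)$ is strictly increasing in $P_d$. Chaining this with Corollary \ref{corollary:pd_decrease}, which states that reducing $E_L$ leaves $P_d$ non-increasing, yields the desired conclusion: as $P_d$ decreases, the cumulative-reward gap between $\pi^*$ and $\pi$ shrinks, and regret decreases.

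The main obstacle I anticipate is making the inequality $\Delta_d > \Delta_s$ fully rigorous. The definitions of $\mathcal{E}_d$ and $\mathcal{E}_s$ describe events over trajectories, but the conditional expectations $\mathbb{E}[R(\pi)\mid\mathcal{E}_d]$ and $\mathbb{E}[R(\pi)\mid\mathcal{E}_s]$ also depend on the reward collected before hitting a dead-end, on the probability that a secure trajectory actually reaches the goal, and potentially on $E_L$ itself. Handling this cleanly requires bounding the reward earned on dead-end trajectories by $\sum_{t=1}^{T} r_t < r_g$ using Assumption \ref{assumption:goal_reward}, and treating the conditional expectations as essentially stable with respect to $E_L$ so that the monotonicity in $P_d$ is the dominant effect. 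Once those technical points are addressed, the linearity of $\text{Regret}(T)$ in $P_d$ makes the final monotonicity argument immediate.
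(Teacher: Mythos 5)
Your decomposition is genuinely different from the paper's. The paper models the episode as a per-step survival process: the goal is reached at step $k$ with probability $(1-P_d)^{k-1}P_g$ (having avoided dead-ends for $k-1$ steps), $R(\pi)$ is written as an explicit sum over the goal-hitting time $k$ plus a ``never reached the goal'' remainder, and the resulting closed form is differentiated in $P_d$; Assumption \ref{assumption:goal_reward} enters because every term of the derivative carries the factor $\sum_{t=k+1}^{T} r_t - r_g < 0$. You instead condition on the single binary event ``the trajectory ever visits a dead-end,'' obtaining $\text{Regret}(T) = (1-P_d)\Delta_s + P_d\Delta_d$ with slope $\Delta_d - \Delta_s$. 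That is a cleaner high-level skeleton, but it pushes all of the work into the inequality $\Delta_d > \Delta_s$, and that is where the argument has a genuine gap rather than a mere technicality.

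The inequality $\Delta_d > \Delta_s$ does not follow from Assumption \ref{assumption:goal_reward} alone. Writing $q = \Pr[\text{goal reached} \mid \mathcal{E}_s]$, you have $\mathbb{E}[R(\pi)\mid\mathcal{E}_s] = \mathbb{E}[\text{inter}\mid\mathcal{E}_s] + q\,r_g$ and $\mathbb{E}[R(\pi)\mid\mathcal{E}_d] = \mathbb{E}[\text{inter}\mid\mathcal{E}_d]$, so $\Delta_d - \Delta_s = q\,r_g - \bigl(\mathbb{E}[\text{inter}\mid\mathcal{E}_d] - \mathbb{E}[\text{inter}\mid\mathcal{E}_s]\bigr)$. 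A secure trajectory need not reach the goal within the horizon, so $q$ can be arbitrarily small, while a dead-end trajectory may accumulate more intermediate reward than a secure one (Assumption \ref{assumption:goal_reward} only caps each partial sum by $r_g$; it does not make it small relative to $q\,r_g$). For small $q$ together with $\mathbb{E}[\text{inter}\mid\mathcal{E}_d] > \mathbb{E}[\text{inter}\mid\mathcal{E}_s]$, the slope is negative and your conclusion fails. You correctly flagged this as the main obstacle: closing it requires either an extra hypothesis (e.g., secure trajectories reach the goal with probability one, or $q\,r_g$ dominating the intermediate-reward gap) or the paper's finer per-step model, in which the goal-reaching probability is itself tied to $P_d$ and the problematic comparison never arises. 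Two smaller points: Definition \ref{definition:p_d} defines $P_d$ as the fraction of collected samples that are dead-end states ($N_d/N_{\text{total}}$), not the probability that a trajectory contains a dead-end, so $\Pr[\mathcal{E}_d] = P_d$ is not ``by Definition \ref{definition:p_d}'' (the paper itself silently reinterprets $P_d$ as a per-step hazard rate, so some reinterpretation is unavoidable, but yours is yet a third reading and should be stated as a modeling choice); and $\Delta_s$, $\Delta_d$ vary with $E_L$ and the policy, so the linear-in-$P_d$ form only yields monotonicity if those conditional expectations are held fixed, an assumption you would need to state explicitly, just as the paper declares $P_d$ and $P_g$ constant.
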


Combining Theorem \ref{theorem:increase} and Theorem \ref{theorem:regret}, we conclude that reducing the episode length decreases the probability of entering a dead-end state $P_d$, which in turn reduces the regret term, except when the episode length reduction preserves $P_d$. 
Building upon the theoretical insights established in this Section \ref{subsection:thm}, we introduce the Adaptive Episode Length Adjustment (AELA) algorithm in the following section. 
\color{black}

\subsection{Algorithm : AELA}

AELA leverages the relationship between episode length and secure exploration, as demonstrated by Theorems \ref{theorem:increase} and \ref{theorem:regret}, to optimize learning efficiency. By adaptively controlling the episode length based on the agent's learning progress, AELA effectively minimizes the probability of encountering dead-end states, thereby reducing regret and enhancing overall performance in multi-agent environments. To summarize, we realized the advantages of training with reduced episode lengths.
However, starting with shorter episodes requires adjusting their length as the agent needs to gather experiences beyond the reduced episode lengths. To address this, we propose an adaptive method within the AELA framework that incrementally increases the episode length based on the agent's learning progress. Specifically, the algorithm monitors the entropy of the learned policy or Q-values; a decreasing trend in entropy indicates that the agent's policy is converging toward stability. When such convergence is detected, the episode length is extended to allow the agent to gather more experiences beyond the initial secure exploration phase. 

Given a set of Q-values \( Q(s, a) \) for each action \( a \) in a state \( s \), the action probability distribution \( P(a) \) can be defined using the softmax function:
\begin{equation}
P(a \mid s) = \frac{\exp(Q(s, a) / \tau)}{\sum_{a} \exp(Q(s, a) / \tau)},
\end{equation}
where \( \tau \) is the temperature parameter that controls the smoothness of the action probability distribution.
If a policy-based method is used, the action probability distribution represents the policy $\pi$.
The entropy \( H \) of the action distribution can then be calculated as:
\begin{equation}
H = - \sum_{a} P(a \mid s) \log P(a \mid s).
\end{equation}
Then, we sum all the entropy values over a mini-batch:
\begin{equation}
H_{\text{total}} = \sum_{b=1}^{B} \sum_{t=1}^{T} \sum_{i=1}^{I} \left( - \sum_{a} P(a_{t}^{(b, i)} \mid s_{t}^{(b, i)}) \log P(a_{t}^{(b, i)} \mid s_{t}^{(b, i)}) \right),
\end{equation}
where, \( H_{\text{total}} \) represents the total entropy accumulated across all batches, time steps, and agents. The variable \( B \) refers to the batch size, which indicates the number of independent episodes being processed simultaneously. The variable \( T \) denotes the number of time steps within each episode, and \( I \) is the number of agents in the multi-agent system.
The action taken by agent \( i \) at time step \( t \) in batch \( b \) is denoted by \( a_{t}^{(b, i)} \), while the state observed by the same agent at the same time step and batch is represented by \( s_{t}^{(b, i)} \). The probability of taking action \( a \) given the state \( s \) for agent \( i \) at time step \( t \) in batch \( b \) is given by \( P(a_{t}^{(b, i)} \mid s_{t}^{(b, i)}) \).

The entropy \( H_{\text{total}} \) is collected over a window size of \( w \).
If the entropy is decreasing, the episode length is increased by $1$ as:

\begin{equation}
E_L \leftarrow E_L + f(H_{\text{trend}}),
f(H_{\text{trend}}) =
\begin{cases}
    1 & \text{if } \alpha < 0, \\
    0 & \text{otherwise},
\end{cases}
\end{equation}

where \( H_{\text{trend}} = \alpha H_{total} + \beta \) represents the linear fitting function for the entropy values in a window. If the slope \( \alpha \) is negative, \( f(H_{\text{trend}}) \) returns 1; otherwise, it returns 0.
The detailed description of the algorithm is presented in Algorithm \ref{algorithm:aela}.
This adaptive adjustment ensures that the agent benefits from reduced dead-end probabilities during the critical early learning stages while still being able to explore and learn from more complex scenarios in the latter parts of episodes.

\section{Experiment}

In this section, we present the experimental results of AELA under two well-known MARL test-beds: the StarCraft multi-agent challenge (SMAC) and modified predator-prey (MPP).
We evaluated the effectiveness of our proposed algorithm by combining it with two widely known algorithms in MARL: VDN and QMIX. We named these combinations AELA-VDN and AELA-QMIX, respectively.

\begin{figure}[ht]
\centering
\subfigure[$P=-2$]{\includegraphics[height=35mm]{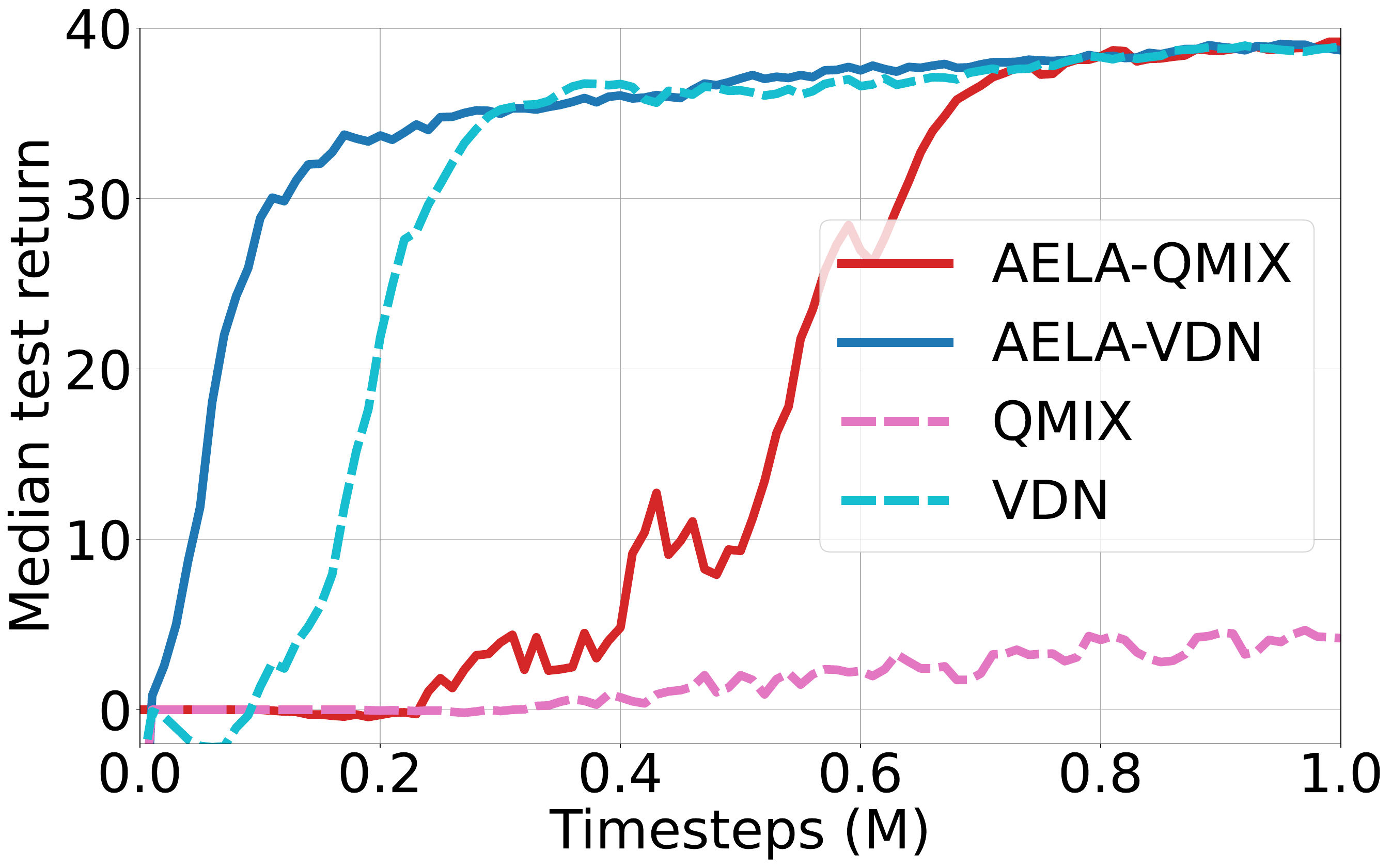}}
\subfigure[$P=-4$]{\includegraphics[height=35mm]{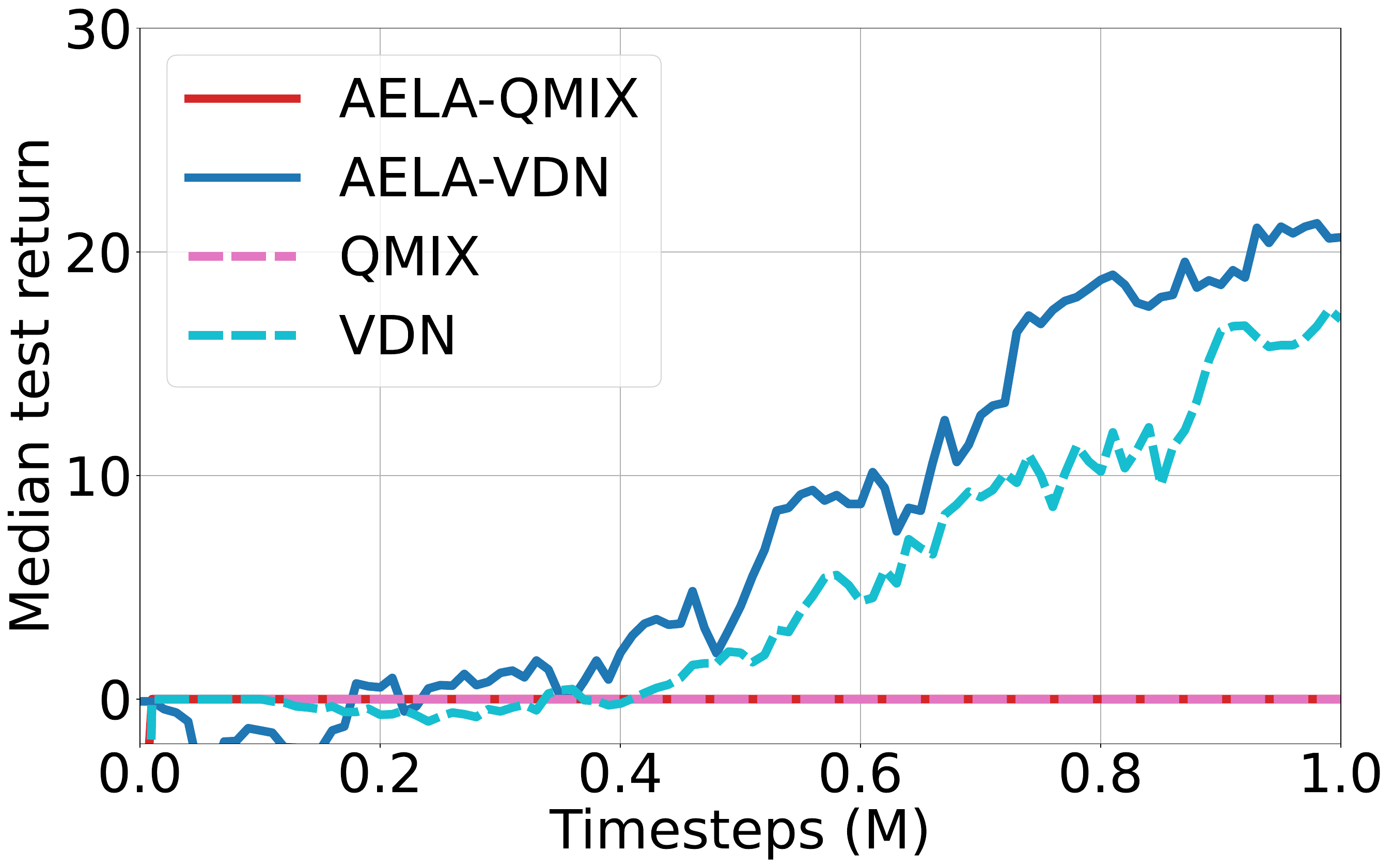}}
\caption{Median test return in the MPP tasks}
\label{fig:mpp}
\end{figure}

\begin{figure*}[ht]
\centering
\includegraphics[width=1.0\columnwidth]{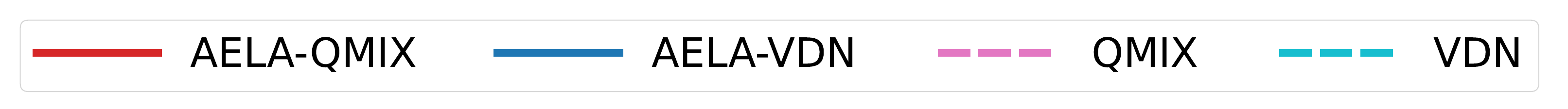} \\
\subfigure[3m]{\includegraphics[height=35mm]{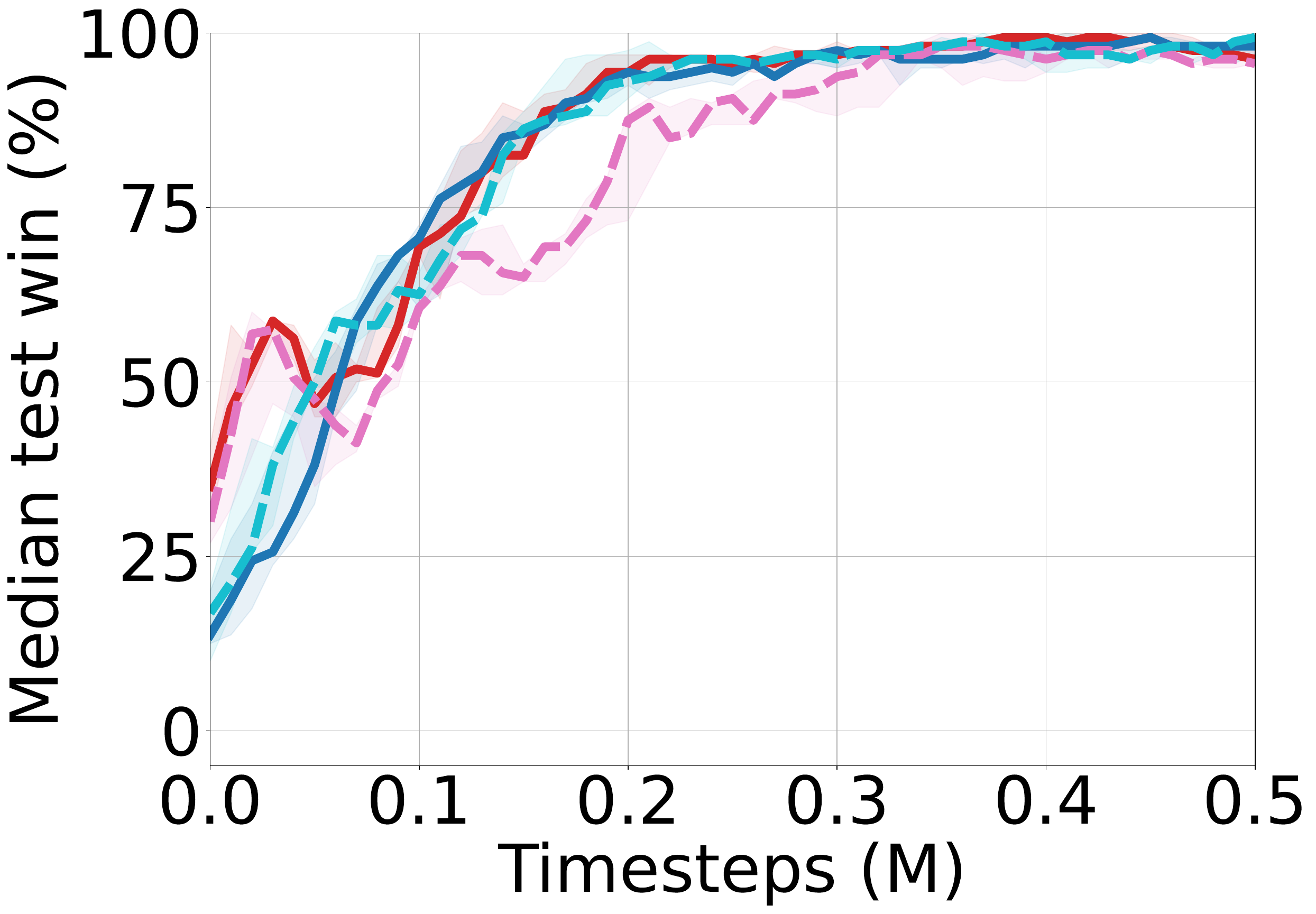}} 
\subfigure[2s\_vs\_1sc]{\includegraphics[height=35mm]{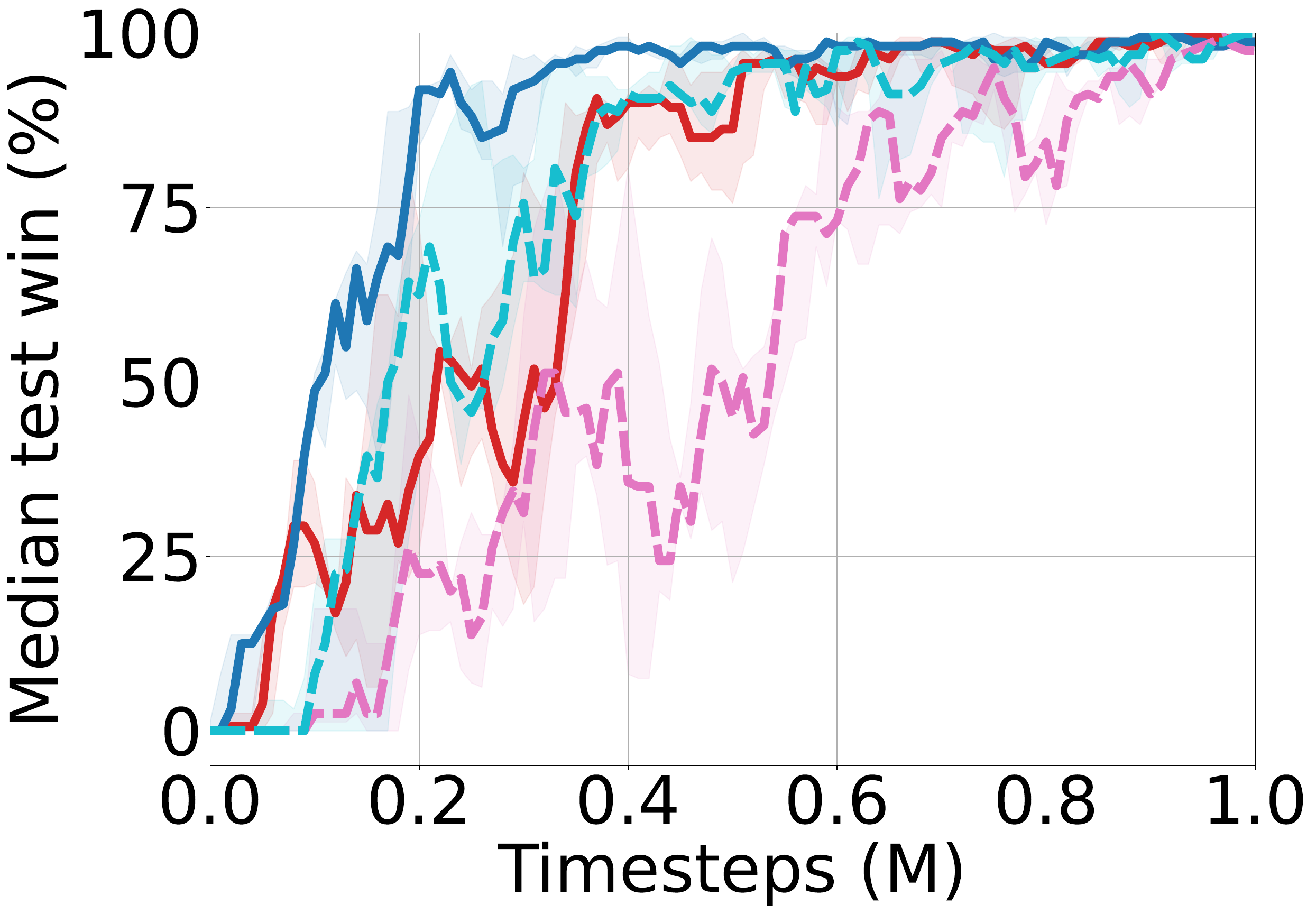}}
\subfigure[MMM2]{\includegraphics[height=35mm]
{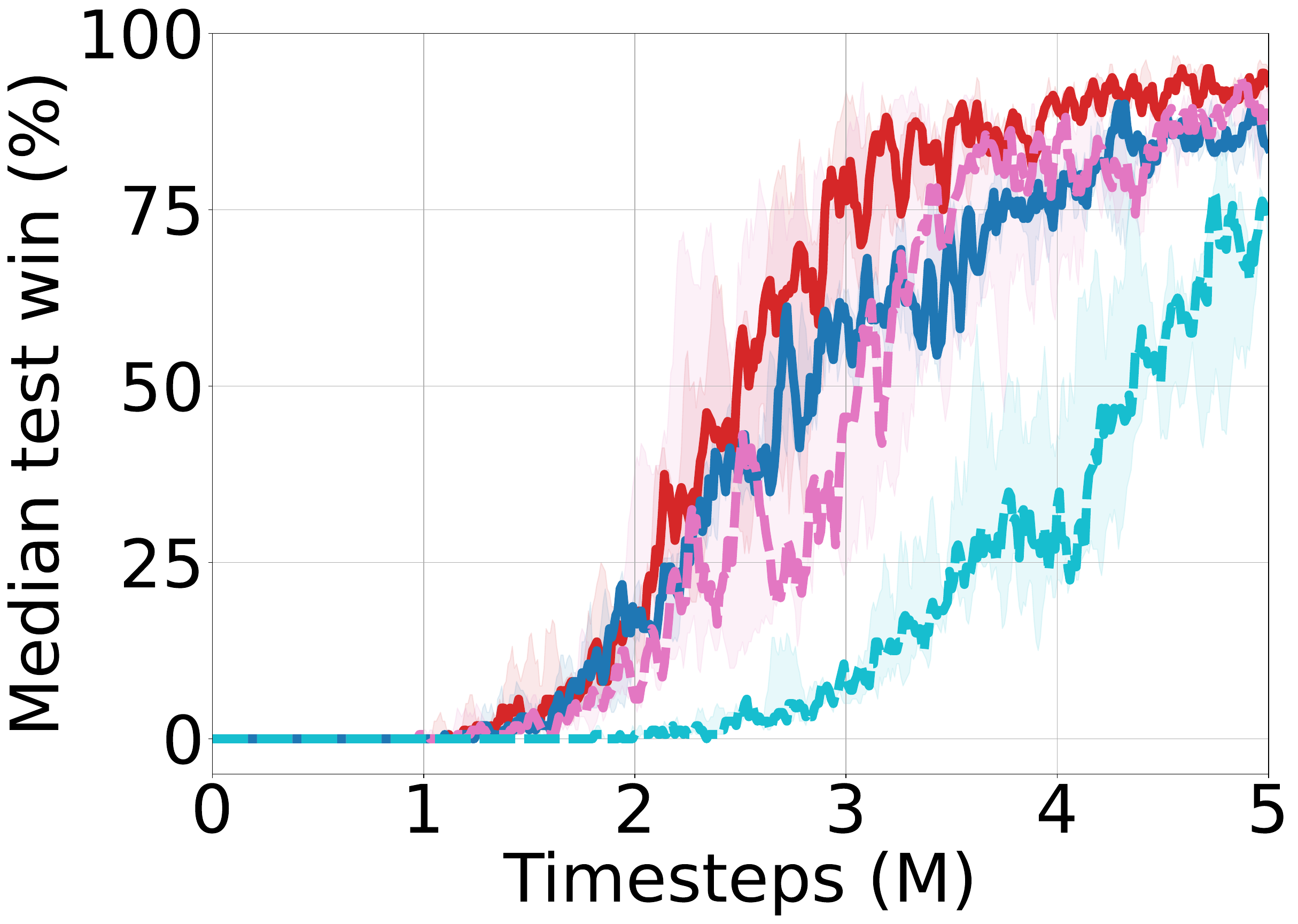}}
\subfigure[3s5z\_vs\_3s6z]{\includegraphics[height=35mm]
{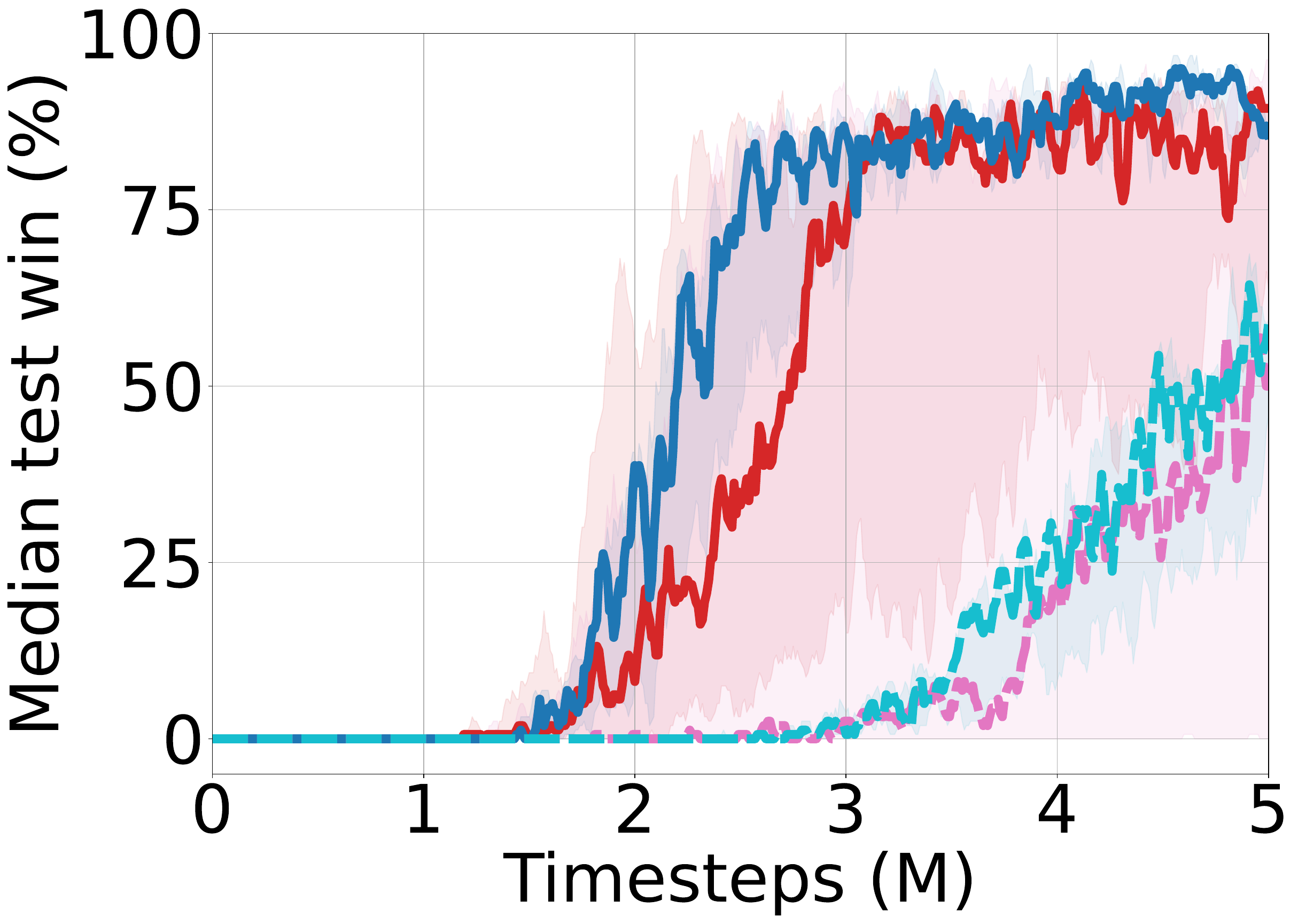}}
\subfigure[6h\_vs\_8z]{\includegraphics[height=35mm]
{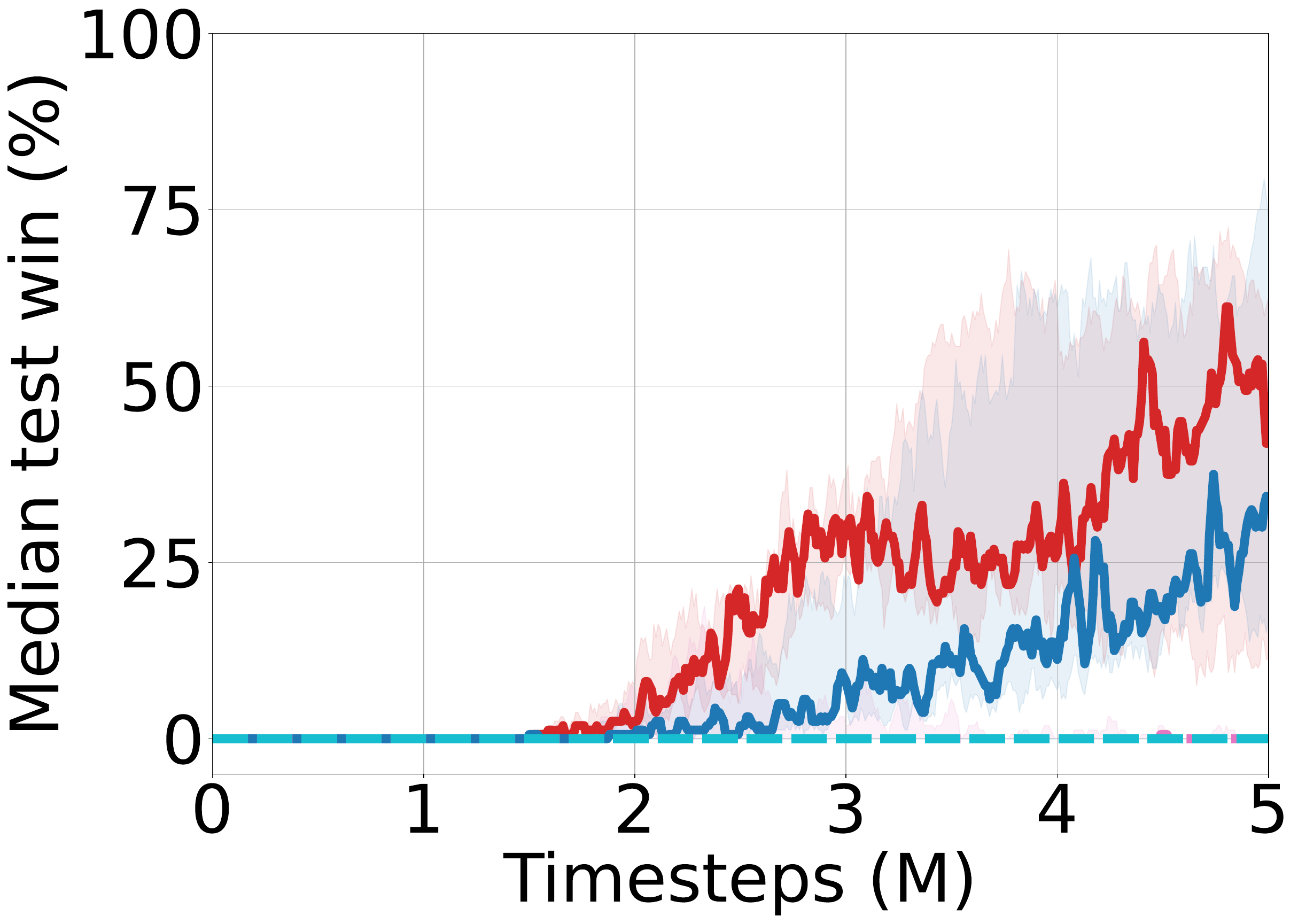}}
\subfigure[Corridor]{\includegraphics[height=35mm]
{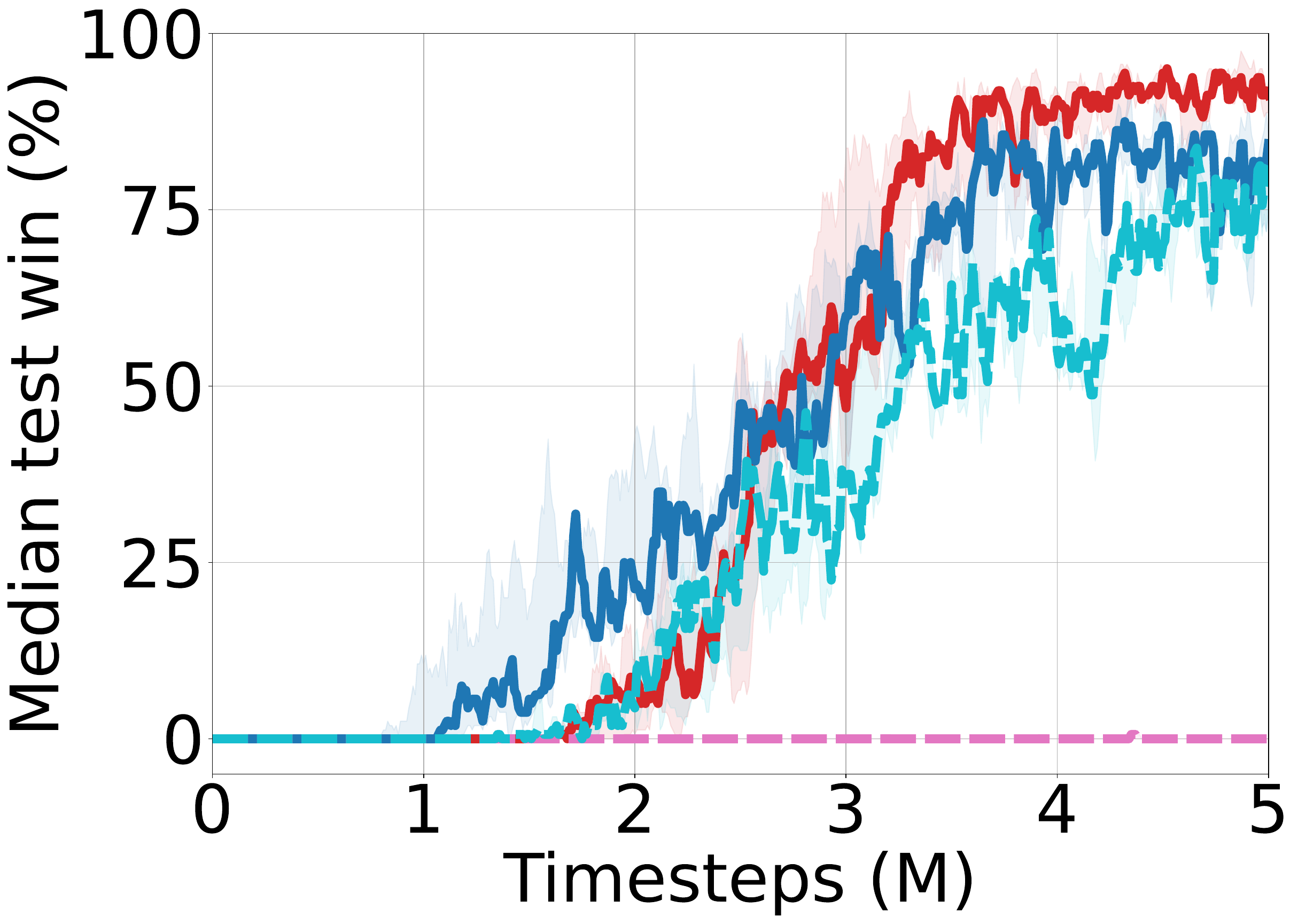}}
\caption{Median test win rates with different SMAC scenarios}
\label{fig:per}
\end{figure*}

\begin{figure}[ht]
\centering
\subfigure[3s5z\_vs\_3s6z]{\includegraphics[height=29mm]{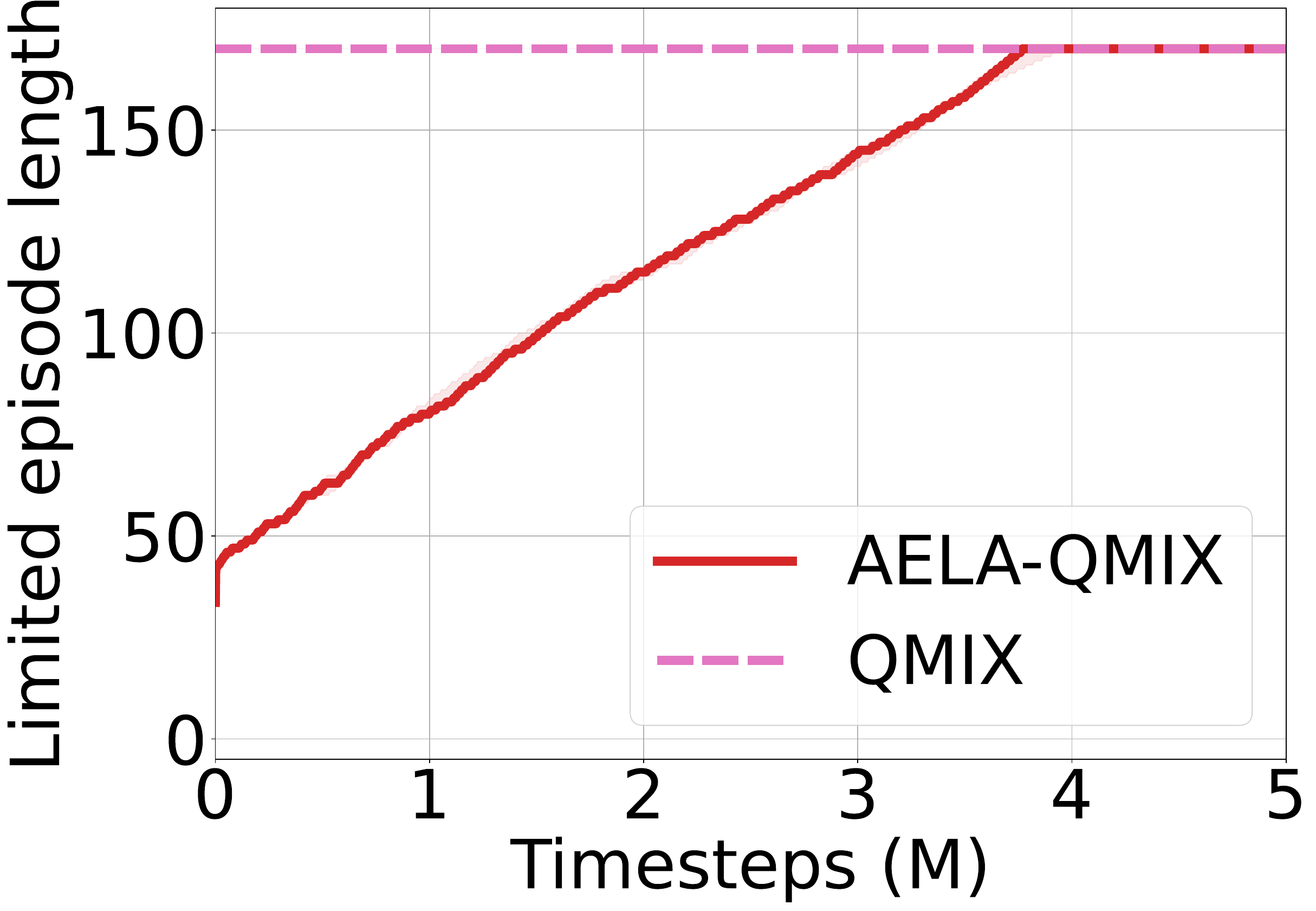}}
\subfigure[6h\_vs\_8z]{\includegraphics[height=29mm]
{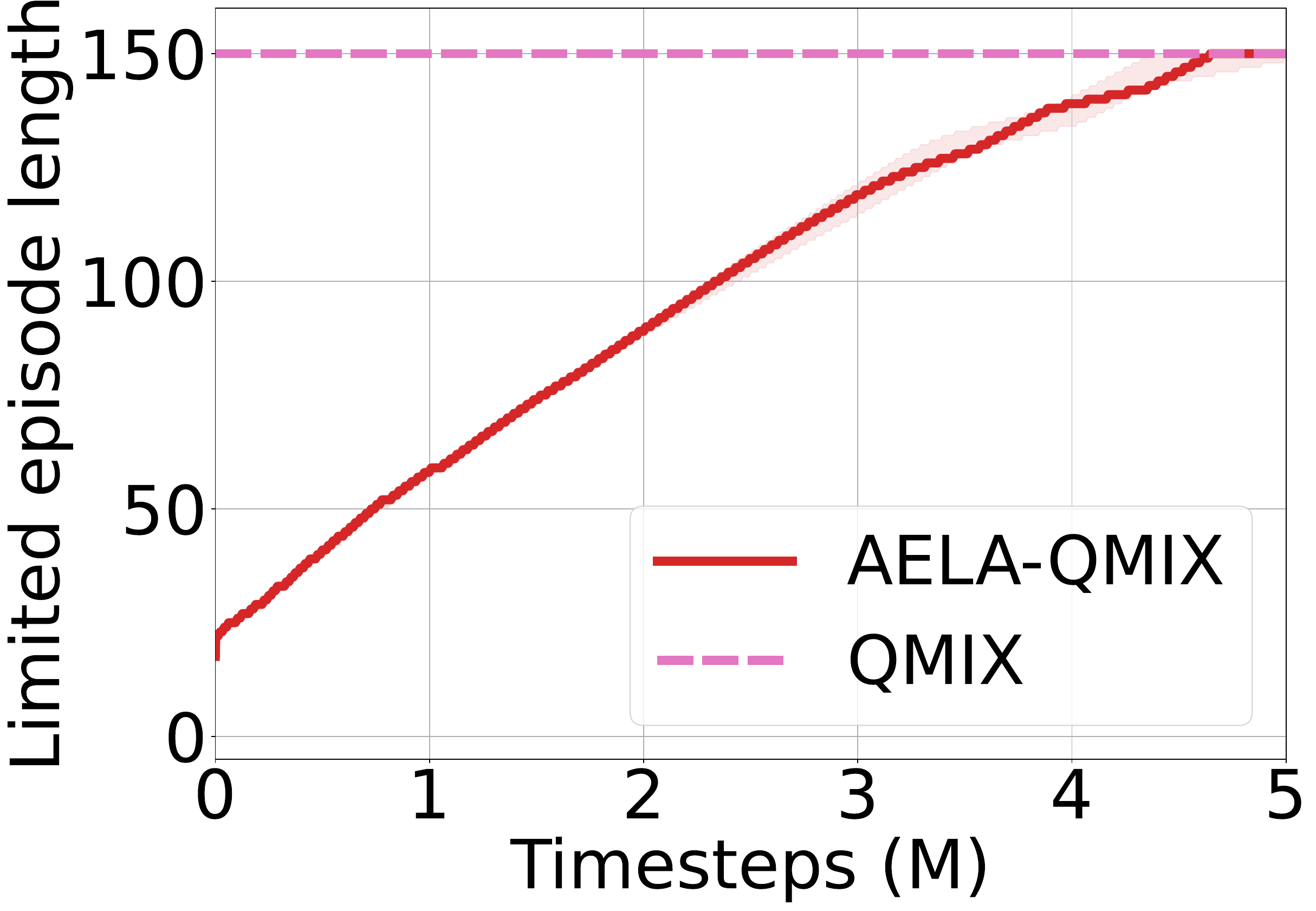}}
\caption{Limited episode length during training}
\label{fig:traineplength}
\end{figure}

\begin{figure}[ht]
\centering
\subfigure[3s5z\_vs\_3s6z]{\includegraphics[height=35mm]{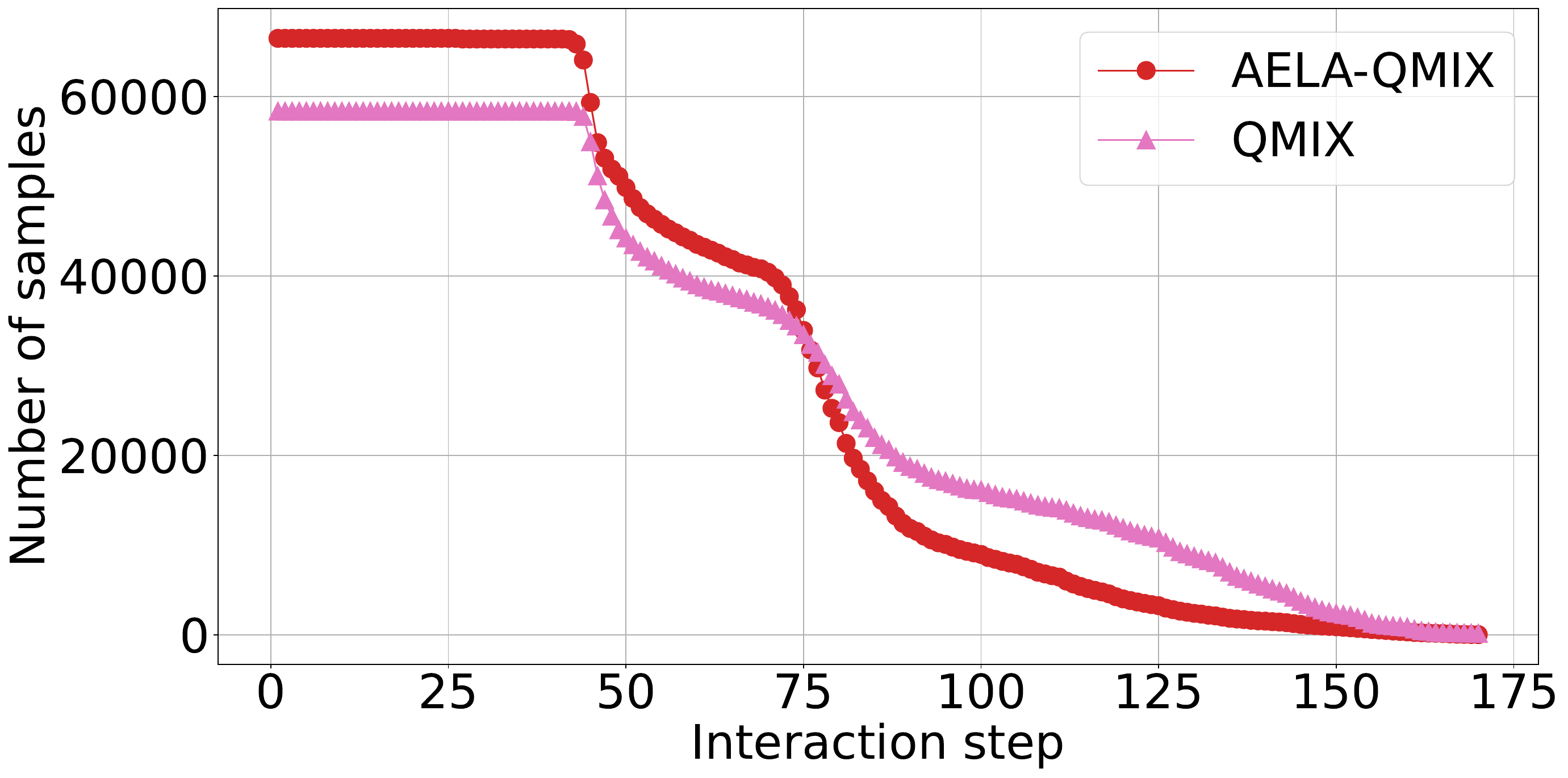}}
\subfigure[6h\_vs\_8z]{\includegraphics[height=35mm]
{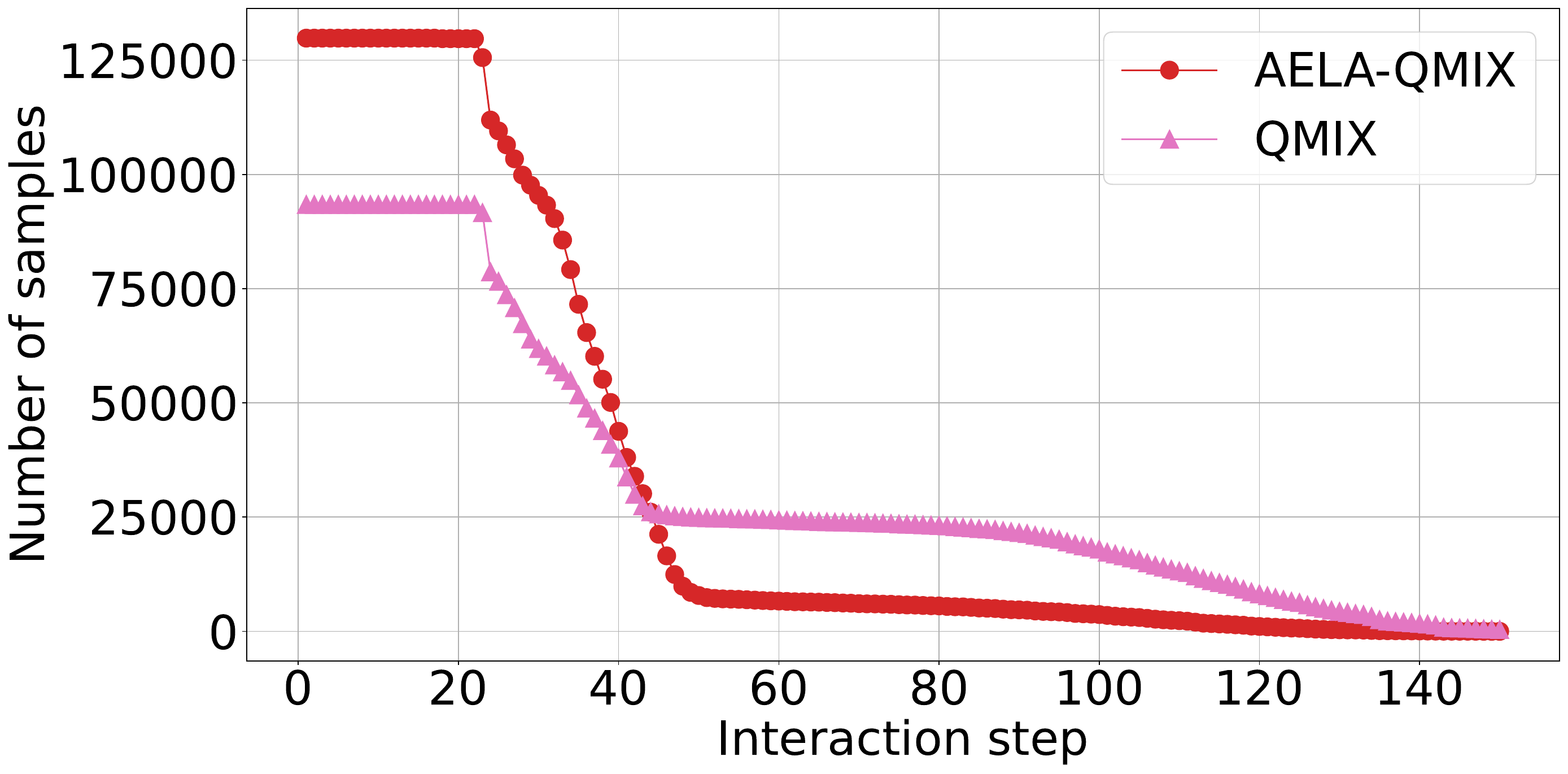}}
\caption{Number of samples with interaction steps}
\label{fig:visit}
\end{figure}

\subsection{Modified predator-prey}
The modified predator-prey scenario is an extended version of the classical predator-prey problem, widely used in MARL research to evaluate agent coordination and learning capabilities. In this scenario, multiple predators collaborate to capture prey within an environment, emphasizing the need for cooperative learning. 
The modified predator-prey scenario employs a more intricate reward structure compared to the original. 
If two or more predators catch the prey, a reward of $+10$ is given, whereas if only one predator catches the prey, a negative reward of $P$ is given.
We tested AELA with eight predators and eight prey, and set $P$ as $-2$ and $-4$.
The experiment is conducted over 16 episodes, and the median return from five independent runs is recorded every 10,000 training steps."

As shown in Figure \ref{fig:mpp}(a), the proposed methods AELA-VDN and AELA-QMIX, which combine AELA with the original algorithms, demonstrate a faster increase in cumulative rewards compared to the original algorithms.
In particular, for QMIX, which is known to be highly sensitive to the order in which states are visited~\cite{mahajan2019maven, gupta2021uneven, zheng2021episodic}, performance significantly improves after combining with AELA due to its ability to prevent the repetition of incorrect actions early in an episode by limiting the episode length.
In the scenario with $P = -4$ depicted in Figure \ref{fig:mpp}(b), only VDN found a policy that achieves high cumulative rewards, and even here, AELA-VDN exhibited a faster rise in cumulative rewards. Although there may be cases where a good policy is not found depending on the value decomposition method, the results in Figure \ref{fig:mpp}(b) are still meaningful as the experimental findings show no degradation in performance, consistent with the theoretical background discussed in Section \ref{subsection:thm}.

\begin{figure*}
\centering
\subfigure[$l=1$]{\includegraphics[width=0.25\textwidth]{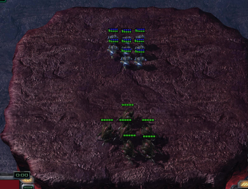}}
\subfigure[$l=7$]{\includegraphics[width=0.25\textwidth]{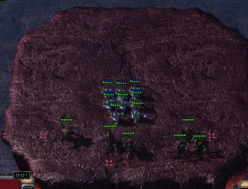}}
\subfigure[$l=15$]{\includegraphics[width=0.25\textwidth]{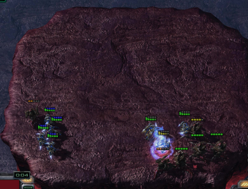}}
\subfigure[$l=22$]{\includegraphics[width=0.25\textwidth]{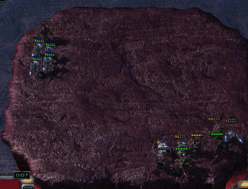}}
\subfigure[$l=33$]{\includegraphics[width=0.25\textwidth]{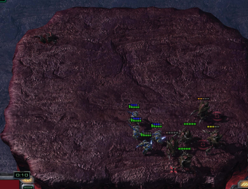}}
\subfigure[$l=58$]{\includegraphics[width=0.25\textwidth]{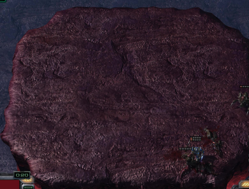}}
\caption{Snapshot of the final policy for AELA-QMIX in 6h\_vs\_8z with interaction steps $l$}
\label{fig:snapaela}
\end{figure*}

\begin{figure}[ht]
\centering
\subfigure[3s5z\_vs\_3s6z]{\includegraphics[height=29mm]{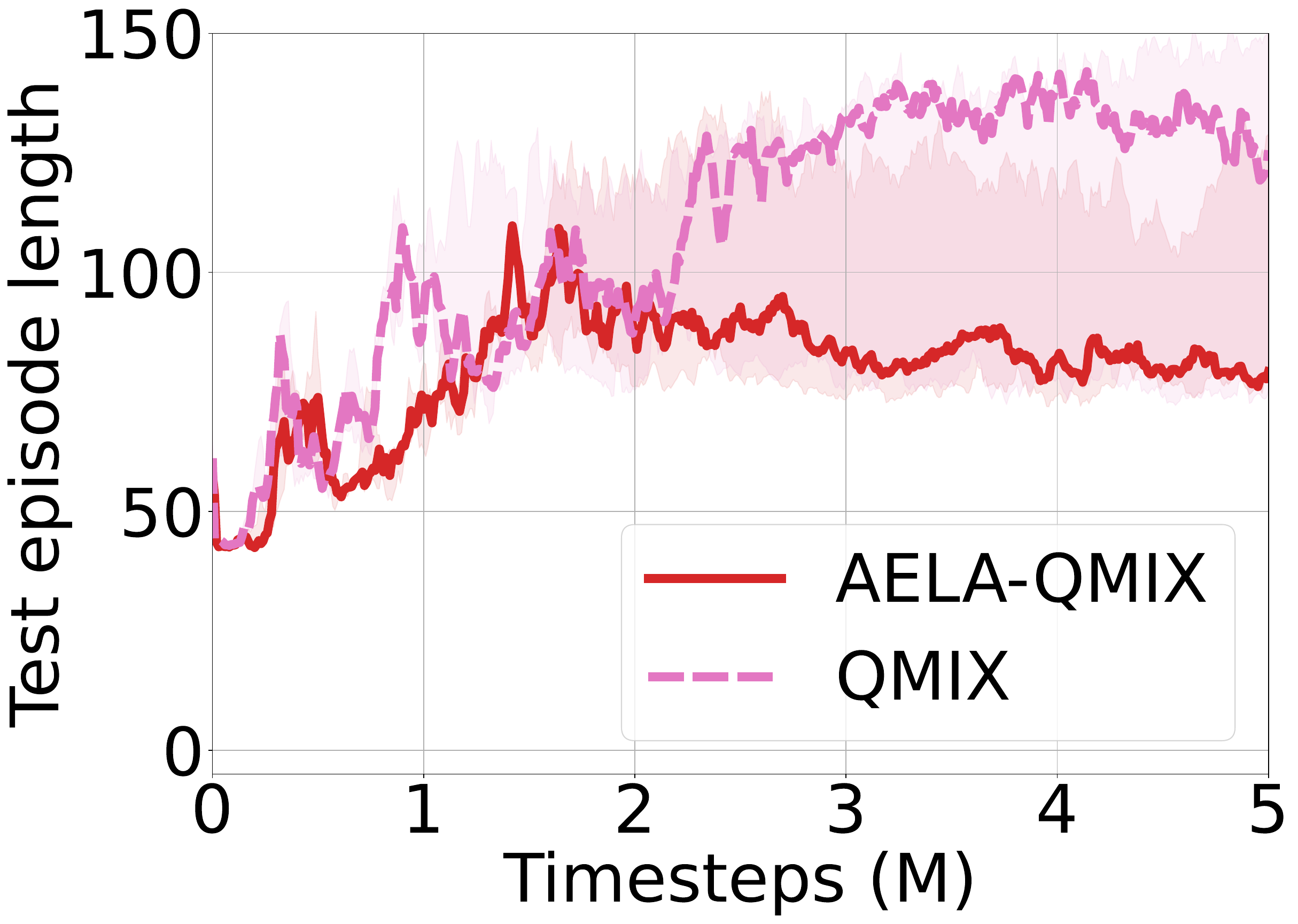}}
\subfigure[6h\_vs\_8z]{\includegraphics[height=29mm]
{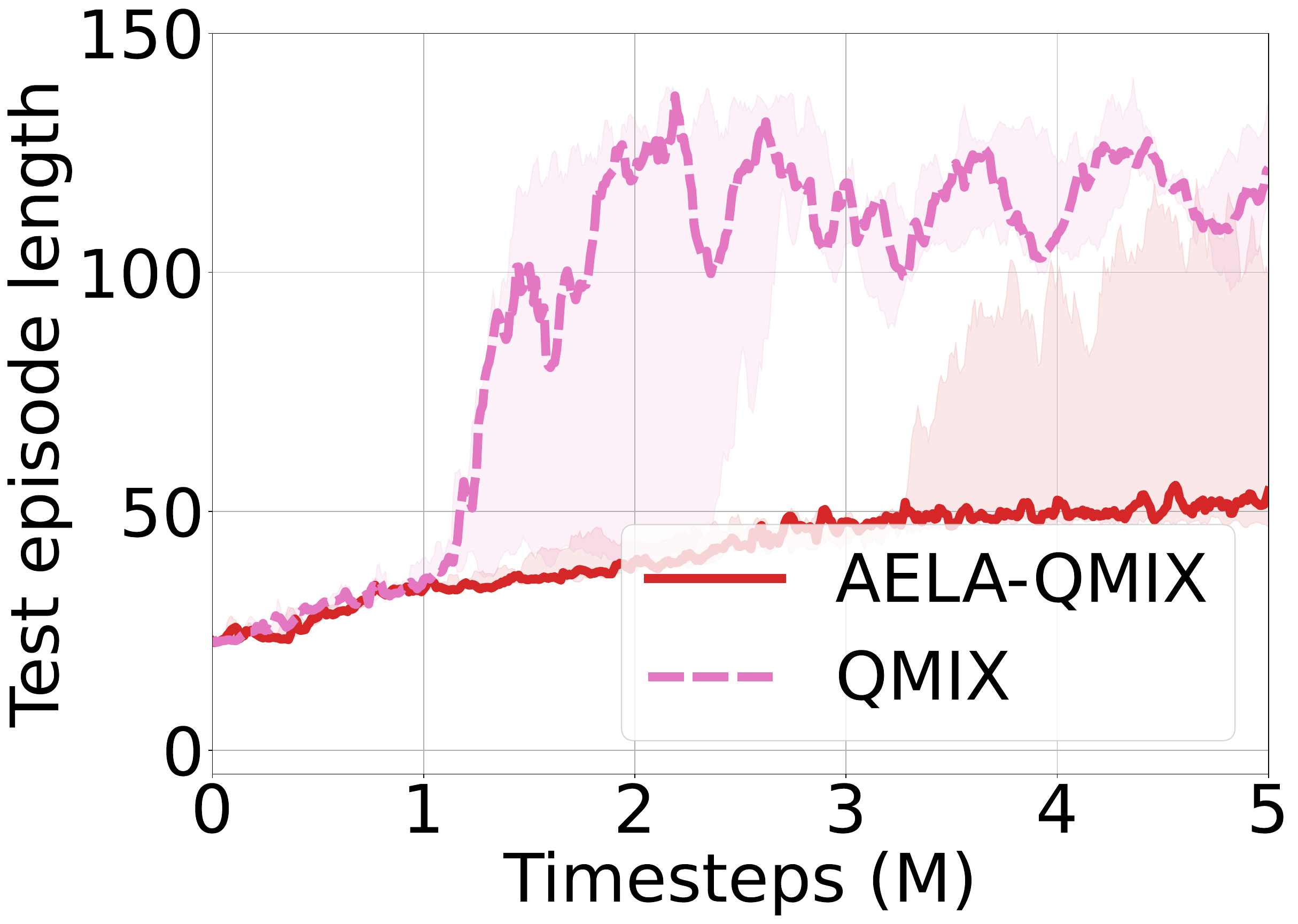}}
\caption{Interaction step at which an episode ends when following the current best policy during testing}
\label{fig:testeplength}
\end{figure}

\begin{figure*}
\centering
\subfigure[$l=1$]{\includegraphics[width=0.25\textwidth]{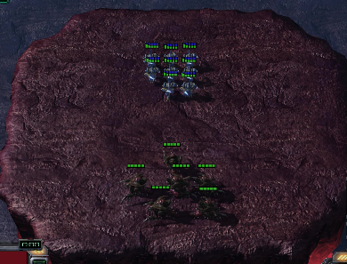}}
\subfigure[$l=7$]{\includegraphics[width=0.25\textwidth]{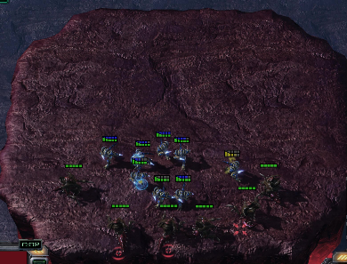}}
\subfigure[$l=21$]{\includegraphics[width=0.25\textwidth]{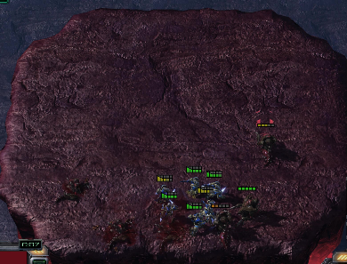}}
\subfigure[$l=38$]{\includegraphics[width=0.25\textwidth]{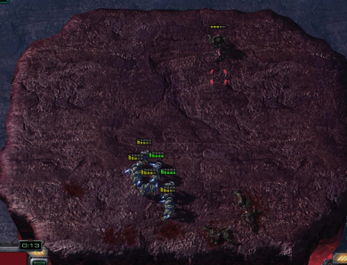}}
\subfigure[$l=75$]{\includegraphics[width=0.25\textwidth]{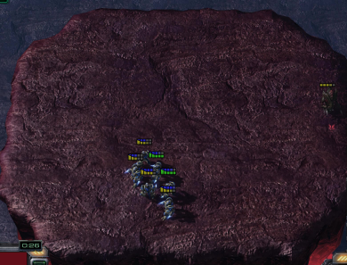}}
\subfigure[$l=149$]{\includegraphics[width=0.25\textwidth]{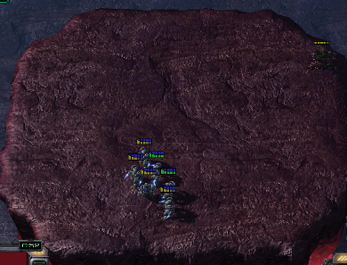}}
\caption{Snapshot of the final policy for QMIX in 6h\_vs\_8z with interaction steps $l$}
\label{fig:snapqmix}
\end{figure*}

\subsection{StarCraft multi-agent challenge}
In SMAC, agents are tasked with solving micromanagement scenarios where they must coordinate their actions to defeat opponent units controlled by the StarCraft II game engine. These scenarios vary in difficulty, unit types, and the number of agents involved, offering various challenges for evaluating MARL methods. The reward function in SMAC is typically defined based on damage dealt, units killed, and victory in the game, thus incentivizing effective coordination and combat tactics. Agents must make decisions with limited observability, requiring efficient exploration and accurate value estimation to achieve optimal performance.
We performed five runs per scenario, and in each run, we conducted 32 tests every 10,000 training steps to calculate the win rate.

Figure \ref{fig:per} shows the median win rate over five runs for VDN, QMIX, AELA-VDN, and AELA-QMIX.
In the relatively easier scenarios, such as \textit{3m} and \textit{2s\_vs\_1sc}, both methods reach a final win rate of nearly 100\%, but the convergence speed is faster when AELA is applied. In the remaining, even more challenging scenarios, AELA leads to improved final performance. Notably, in the \textit{6h\_vs\_8z} and \textit{Corridor} scenarios, the win rate with the original QMIX was close to zero, whereas AELA-QMIX achieved a significantly higher win rate.

Interestingly, performance improvements were achieved solely by adaptively limiting the episode length based on the learning situation without applying additional intrinsic rewards or other complicated value factorization mechanisms.
We analyze that AELA improves performance by focusing on exploring diverse states early in the episode.
Figure \ref{fig:traineplength} shows the restricted episode length values applied as constraints during the training phase.
With these restrictions applied, the number of samples in interaction steps was encouraged to be concentrated in the early part of the episode, as shown in Figure \ref{fig:visit}.
Due to this intensive data collection early in the episode, AELA can reduce visitation to unnecessary states later in the episode, particularly dead-end states.
We analyzed cases in SMAC where the presence of dead-end states prevents finding a winning policy. 
Figure \ref{fig:testeplength} shows the episode length during testing. 
There is a tendency for QMIX to converge to a policy that has longer episode lengths, which is associated with dead-end states.

To examine this situation in more detail, a snapshot of the actual final policy is shown in Figure \ref{fig:snapaela} and \ref{fig:snapqmix}.
As shown in Figures \ref{fig:snapaela}, in the 	\textit{6h\_vs\_8z} scenario, an effective strategy involves having a few allied units lure some enemy units away. 
This prevents those enemy units from participating in the battle, allowing the remaining allied units to first attack the rest of the enemy units, and then finally engage the lured enemy units.
However, as shown in Figure \ref{fig:snapqmix}, if the strategy of luring enemy units is not properly learned, it can lead to an incorrect strategy where an allied unit moves away alone from the rest of the allied units, even when the enemy units are not following.
Such a state, where all allied units die in combat against enemy units, leaving one isolated allied unit due to the failure of luring, can be considered a dead-end state. This is because a single allied unit cannot defeat all the remaining enemy units, making it impossible to transition to a winning state.
Therefore, as shown in Figure \ref{fig:snapqmix}, all states after $l=38$ can be considered dead-end states.
AELA mitigates the risk of entering dead-end states by terminating episodes at appropriate times and starting new episodes, which results in improved performance.

\section{Conclusion}
In this paper, we presented an adaptive approach to episode length adjustment for MARL. 
We theoretically demonstrated that limiting the episode length reduces the visitation probability of dead-end states, thereby allowing the learned value function to be closer to the optimal value function.
Based on this background, we proposed a practical algorithm that initially limits the episode length and gradually extends it while assessing convergence through the entropy of Q-values.
The proposed AELA algorithm was experimentally verified to outperform the traditional fixed episode length approach across various scenarios in multi-agent environments.

To the best of our knowledge, this paper is the first attempt to improve learning performance in multi-agent reinforcement learning by adjusting the episode length.
Adjusting the episode length can be applied to any method, regardless of value decomposition techniques, additional intrinsic rewards, or network architecture, suggesting the potential for performance improvement from a new perspective.
Future work could focus on extending this approach to more complex domains and incorporating additional metrics that could further enhance the adaptiveness of the episode length, ensuring optimal policy convergence even in highly dynamic environments.


\begin{acks}
This work was supported in part by Electronics and Telecommunications Research Institute (ETRI) grants funded by the Korean Government [25ZB1100, Core Technology Research for Self-Improving Integrated Artificial Intelligence System] and by the ETRI Research and Development Support Program of MSIT/IITP [RS-2023-00216821, Development of Beyond X-verse Core Technology for Hyper-realistic Interactions by Synchronizing the Real World and Virtual  Virtual Space].
\end{acks}



\bibliographystyle{ACM-Reference-Format} 

\balance
\bibliography{BYOO_AAMAS25}

\clearpage
\appendix
\onecolumn
\section{Proof of Theorem \ref{theorem:increase}}
\label{appendix:proof_increase}

\renewcommand{\thetheorem}{1}
\begin{theorem}
Let $E_L$ be the fixed episode length, and $N_{s}$ be the expected number of secure state visits over all episodes.
Then, $N_{s}$ remains constant or increases if $E_L$ is reduced.
\label{theorem:increase}
\end{theorem}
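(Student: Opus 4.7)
The plan is to express $N_{s}$ explicitly as a function of $E_{L}$, using the setup implicit in the paper (a fixed budget of total samples $N_{\text{total}}$ generated by repeatedly starting fresh episodes from the initial distribution), and then reduce the claim to a statement about averages of a non-increasing sequence, which follows immediately from Lemma~\ref{lemma:mono}.

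\paragraph{Step 1: write $N_{s}$ as a sum over interaction steps.} First I would note that, with episode length fixed at $E_{L}$ and total sample budget $N_{\text{total}}$, the number of (independent) episodes executed is $N_{\text{total}}/E_{L}$. Linearity of expectation applied at each interaction step $l=1,\dots,E_{L}$ yields
\begin{equation}
N_{s}(E_{L}) \;=\; \frac{N_{\text{total}}}{E_{L}} \sum_{l=1}^{E_{L}} P_{s}(l).
\end{equation}
So the claim ``$N_{s}$ does not decrease when $E_{L}$ decreases'' is equivalent to the claim that the \emph{average} $\tfrac{1}{E_{L}}\sum_{l=1}^{E_{L}} P_{s}(l)$ is non-increasing in $E_{L}$.

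\paragraph{Step 2: exploit monotonicity of $P_{s}(l)$.} Let $E_{L}' < E_{L}$. By Lemma~\ref{lemma:mono}, $P_{s}(l)$ is non-increasing in $l$, so every tail term satisfies $P_{s}(l) \leq P_{s}(E_{L}') \leq \tfrac{1}{E_{L}'}\sum_{k=1}^{E_{L}'} P_{s}(k)$ for $l = E_{L}'+1,\dots,E_{L}$. Summing the tail and adding the head gives
\begin{equation}
\sum_{l=1}^{E_{L}} P_{s}(l) \;\leq\; \sum_{l=1}^{E_{L}'} P_{s}(l) \;+\; \frac{E_{L}-E_{L}'}{E_{L}'}\sum_{k=1}^{E_{L}'} P_{s}(k) \;=\; \frac{E_{L}}{E_{L}'}\sum_{l=1}^{E_{L}'} P_{s}(l),
\end{equation}
which rearranges to $\tfrac{1}{E_{L}}\sum_{l=1}^{E_{L}} P_{s}(l) \leq \tfrac{1}{E_{L}'}\sum_{l=1}^{E_{L}'} P_{s}(l)$. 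Multiplying both sides by $N_{\text{total}}$ yields $N_{s}(E_{L}) \leq N_{s}(E_{L}')$, which is the conclusion.

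\paragraph{Main obstacle.} The real content is Step~1, not Step~2; the inequality on averages of a monotone sequence is standard. The delicate part is pinning down the convention that makes the theorem meaningful, namely that $N_{\text{total}}$ is held fixed and episodes are restarted from the same initial distribution when one terminates. Without this assumption the statement is vacuous (a shorter $E_{L}$ trivially produces fewer visits per episode). I would therefore begin the proof with a sentence making this convention explicit, so that the identity in Step~1 is unambiguous; the remainder of the argument is then a one-line consequence of Lemma~\ref{lemma:mono}. An edge case worth a brief remark is equality: if $P_{s}(l) = P_{s}(E_{L}')$ for every $l \in \{E_{L}'+1,\dots,E_{L}\}$ (for example, if no dead-ends are reachable past step $E_{L}'$), then both averages coincide and $N_{s}$ is merely preserved, matching the ``remains constant or increases'' phrasing.
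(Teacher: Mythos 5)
Your proposal is correct and follows essentially the same route as the paper's proof: both reduce the claim to the identity $N_{s}(E_{L}) = \frac{N_{\text{total}}}{E_{L}}\sum_{l=1}^{E_{L}}P_{s}(l)$ and then use the monotonicity of $P_{s}(l)$ from Lemma~\ref{lemma:mono} to show that this running average is non-increasing in $E_{L}$. The only (harmless) difference is that you compare arbitrary $E_{L}' < E_{L}$ directly, while the paper checks the unit increment $E_{L}\to E_{L}+1$ and lets iteration do the rest; your explicit remark that the whole statement hinges on holding $N_{\text{total}}$ fixed while restarting episodes is a fair observation about an assumption the paper leaves implicit.
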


\begin{proof}[\textbf{Proof of Theorem \ref{theorem:increase}}]
This is the proof process for Theorem \ref{theorem:increase}.
$N_{\text{total}}$ is the total number of collected data samples, 
$E_L$ is the episode length (i.e., the number of time steps in an episode), and $E_N$ is the number of episodes in the total data samples. 
$P_{s}(l)$ is the probability that the state is secure at the $l$-th interaction step of an episode, $E_{s}(l)$ is the expected number of secure state visits per episode, and $N_{s}$ is the expected number of secure state visits across all episodes.
Based on the above definitions, this equation holds: 
\begin{align}
E_L \times E_N = N_{\text{total}} \implies E_N = \frac{N_{\text{total}}}{E_L}.
\end{align}
The expected number of secure state visits per episode is given by:
\begin{align}
E_{s} = \sum_{l=1}^{E_L} P_{s}(l).
\end{align}
Then, the expected number of secure state visits across all episodes, $N_{s}$, is given by:
\begin{align}
N_{s} = E_N \times E_{s} = \frac{N_{\text{total}}}{E_L} \times \left( \sum_{l=1}^{E_L} P_{s}(l) \right).
\end{align}
Considering \( N_{s} \) as a function of \( E_L \), we examine how \( N_{s} \) changes as \( E_L \) increases.
\begin{align}
N_{s}(E_L) = \frac{N_{\text{total}}}{E_L} \times \left( \sum_{l=1}^{E_L} P_{s}(l) \right).
\end{align}
Our goal is to show that \( N_{s}(E_L) \) increases or remains constant as \( E_L \) increases. To do this, we calculate the change in \( N_{s}(E_L) \), denoted as \( \Delta N_{s} \).
Since the episode length is a discrete value, the change in \( N_{s} \) when \( E_L \) increases from \( E_L \) to \( E_L + 1 \) is given by:
\begin{align}
\Delta N_{s} = N_{s}(E_L + 1) - N_{s}(E_L).
\end{align}
Substituting the values:
\begin{align}
N_{s}(E_L + 1) = \frac{N_{\text{total}}}{E_L + 1} \times \left( \sum_{l=1}^{E_L + 1} P_{s}(l) \right).
\end{align}
\begin{align}
N_{s}(E_L) = \frac{N_{\text{total}}}{E_L} \times \left( \sum_{l=1}^{E_L} P_{s}(l) \right).
\end{align}
Thus, the change is:
\begin{align}
\Delta N_{s} = N_{\text{total}} \times \left[ \frac{1}{E_L + 1} \left( \sum_{l=1}^{E_L + 1} P_{s}(l) \right) - \frac{1}{E_L} \left( \sum_{l=1}^{E_L} P_{s}(l) \right) \right].
\end{align}
To determine the sign of \( \Delta N_{s} \), we simplify the expression. We combine the terms by finding a common denominator:
\begin{align}
\Delta N_{s} = N_{\text{total}} \times \left[ \frac{E_L \left( \sum_{l=1}^{E_L + 1} P_{s}(l) \right) - (E_L + 1) \left( \sum_{l=1}^{E_L} P_{s}(l) \right)}{E_L (E_L + 1)} \right].
\end{align}
Expanding and simplifying the numerator:
\begin{align}
\Delta N_{s} &= N_{\text{total}} \times \left[ \frac{E_L \sum_{l=1}^{E_L + 1} P_{s}(l) - (E_L + 1) \sum_{l=1}^{E_L} P_{s}(l)}{E_L (E_L + 1)} \right] \\
&= N_{\text{total}} \times \left[ \frac{E_L \left( \sum_{l=1}^{E_L} P_{s}(l) + P_{s}(E_L + 1) \right) - (E_L + 1) \sum_{l=1}^{E_L} P_{s}(l)}{E_L (E_L + 1)} \right] \\
&= N_{\text{total}} \times \left[ \frac{E_L P_{s}(E_L + 1) - \sum_{l=1}^{E_L} P_{s}(l)}{E_L (E_L + 1)} \right].
\end{align}
Thus, $\Delta N_{s} = \frac{N_{\text{total}}}{E_L (E_L + 1)} \left( E_L P_{s}(E_L + 1) - \sum_{l=1}^{E_L} P_{s}(l) \right)$.
The sign of $\Delta N_{s}$ depends on the sign of the numerator.
Since $P_{s}(l)$ is a monotonically non-increasing function, the following holds for all $l = 1, 2, \dotsc, E_L$:
\begin{align}
P_{s}(l) \geq P_{s}(E_L + 1).
\end{align}
Using the above property, we derive an inequality for the sum.
\begin{align}
\sum_{l=1}^{E_L} P_{s}(l) \geq \sum_{l=1}^{E_L} P_{s}(E_L + 1) = E_L P_{s}(E_L + 1).
\end{align}
Thus, the following holds:
\begin{align}
\sum_{l=1}^{E_L} P_{s}(l) \geq E_L P_{s}(E_L + 1).
\end{align}
Using this inequality, the value of the numerator becomes:
\begin{align}
E_L P_{s}(E_L + 1) - \sum_{l=1}^{E_L} P_{s}(l) \leq 0.
\end{align}
Thus, the numerator is less than or equal to 0, which implies that \( \Delta N_{s} \leq 0 \).

Since \( \Delta N_{s} \leq 0 \), \( N_{s}(E_L) \) is non-increasing as \( E_L \) increases. This means that decreasing the episode length \( E_L \) will either keep $N_{s}$ constant or cause it to increase.
\end{proof}

\section{Proof of Theorem \ref{theorem:regret}}
\label{appendix:proof_regret}

\renewcommand{\thetheorem}{2}
\begin{theorem}
Under Assumption \ref{assumption:goal_reward}, as \( P_d \) decreases, the difference in cumulative rewards between the optimal policy and the agent's policy also decreases, thereby reducing Regret(T) (defined in Definition \ref{definition:regret}).
\label{theorem:regret}
\end{theorem}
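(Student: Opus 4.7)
The plan is to establish a lower bound on Regret$(T)$ that is an increasing function of $P_d$. The key idea is that episodes entering a dead-end state cannot collect the goal reward $r_g$, and Assumption \ref{assumption:goal_reward} guarantees that $r_g$ strictly dominates any interval sum of intermediate rewards, so each dead-end episode contributes a strictly positive shortfall to the regret. Monotonicity then follows by linking the number of dead-end episodes to $P_d$ via Definition \ref{definition:p_d}.

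First I would decompose the agent's cumulative reward as $R(\pi) = \sum_{e \in \mathcal{E}_s} R_e(\pi) + \sum_{e \in \mathcal{E}_d} R_e(\pi)$, where $\mathcal{E}_s$ and $\mathcal{E}_d$ denote, respectively, the sets of episodes that reach the goal and those that enter a dead-end state. For any $e \in \mathcal{E}_d$, the definition of a dead-end state rules out reaching $s_g$ at any subsequent step, so $R_e(\pi)$ reduces to a sum of intermediate rewards. Assumption \ref{assumption:goal_reward} then bounds this quantity strictly above by $r_g$. In contrast, the optimal policy $\pi^*$ reaches the goal in every episode, so $R_e(\pi^*) \geq r_g$. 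Setting $\Delta := r_g - \max \sum_{t} r_t > 0$, I obtain a per-episode shortfall of at least $\Delta$ for each $e \in \mathcal{E}_d$, while the shortfall on $\mathcal{E}_s$ is either zero or, at worst, a bounded intermediate-reward difference.

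Second, I would use Definition \ref{definition:p_d} to relate the sample-level proportion $P_d = N_d / N_{\text{total}}$ to $|\mathcal{E}_d|$: since dead-end samples arise only within dead-end episodes, $|\mathcal{E}_d|$ is proportional to $P_d$ up to a factor involving the expected number of dead-end state visits per such episode. Summing the per-episode shortfalls gives
\begin{align}
\text{Regret}(T) \;\geq\; |\mathcal{E}_d| \cdot \Delta \;\propto\; P_d \cdot \Delta,
\end{align}
so that any decrease in $P_d$ yields a decrease in this lower bound on Regret$(T)$. Combined with Corollary \ref{corollary:pd_decrease}, reducing $E_L$ reduces $P_d$ and therefore reduces Regret$(T)$, completing the chain of implications.

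The main obstacle will be rigorously converting the sample-level quantity $P_d$ into the episode-level count $|\mathcal{E}_d|$, since Definition \ref{definition:p_d} counts individual state visits rather than whole episodes. I would address this either by imposing a mild regularity assumption on the average length of dead-end episodes (making the proportionality above explicit) or by working directly with $N_d$ and replacing the per-episode gap $\Delta$ with a per-step version. A secondary subtlety is handling the case where $\pi$ and $\pi^*$ differ on intermediate rewards within successful episodes; Assumption \ref{assumption:goal_reward} ensures any such difference is itself bounded by $r_g$, which preserves the sign of the inequality and hence the monotone dependence of Regret$(T)$ on $P_d$.
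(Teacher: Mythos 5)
There is a genuine gap in the logical structure of your argument. You set out to establish a \emph{lower bound} on $\text{Regret}(T)$ of the form $|\mathcal{E}_d|\cdot\Delta \propto P_d\cdot\Delta$, and then conclude that decreasing $P_d$ decreases $\text{Regret}(T)$. But a decreasing lower bound does not imply that the bounded quantity decreases: if $\text{Regret}(T)$ happened to be constant in $P_d$ while your bound $P_d\cdot\Delta$ shrank, every step of your argument would still hold and the conclusion would be false. To prove the theorem you must show that the regret \emph{itself} is monotonically increasing in $P_d$, which requires either an exact expression for $R(\pi)$ as a function of $P_d$ or a two-sided control that pins down how the regret varies. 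The paper does the former: it models the agent's return as an expectation over the step $k$ at which the goal is reached, with per-step dead-end probability $P_d$ and goal probability $P_g$, so that
\begin{align}
R(\pi) = \sum_{k=1}^{T}(1-P_d)^{k-1}P_g\Bigl(\sum_{t=1}^{k}r_t + r_g\Bigr) + \Bigl(1-\sum_{k=1}^{T}(1-P_d)^{k-1}P_g\Bigr)\sum_{t=1}^{T}r_t,
\end{align}
and then differentiates $\text{Regret}(T)$ with respect to $P_d$, showing the derivative is positive because each coefficient $\sum_{t=k+1}^{T}r_t - r_g$ is negative by Assumption \ref{assumption:goal_reward}. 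Your intuition — that dead-end episodes forfeit $r_g$ and that Assumption \ref{assumption:goal_reward} makes this forfeiture dominate any intermediate-reward differences — is exactly the mechanism the paper's derivative computation exploits, but your formalization stops one step short of the monotonicity claim.

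A secondary, smaller issue is the one you flag yourself: Definition \ref{definition:p_d} defines $P_d$ at the level of state visits, not episodes, so the proportionality $|\mathcal{E}_d|\propto P_d$ needs an explicit regularity assumption on dead-end episode lengths that neither the paper nor the theorem statement supplies. The paper sidesteps this entirely by reinterpreting $P_d$ as a per-step transition probability into dead-ends (it explicitly assumes $P_d$ and $P_g$ are constants), which is a modeling choice rather than a derivation from Definition \ref{definition:p_d}. If you want to salvage your episode-counting route, the cleanest fix is to compute the expected shortfall exactly — i.e., replace your inequality $\text{Regret}(T)\geq|\mathcal{E}_d|\cdot\Delta$ with an equality $\text{Regret}(T)=\mathbb{E}[\text{shortfall}]$ decomposed over goal-reaching and dead-end episodes — and then verify that every term of that exact decomposition is non-decreasing in $P_d$.
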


\begin{proof}[\textbf{Proof of Theorem \ref{theorem:regret}}]
We will expand the regret after $T$ time steps.
The probability of visiting dead-end states and goal states are denoted as $P_d$ and $P_g$, respectively.
Here, we assume that $P_d$ and $P_g$ are constant.
The regret function is given as follows:
\begin{align}
\text{Regret}(T) &= R(\pi^*) - R(\pi) \\
&= \left( \sum_{t=1}^T r_t^* + r_g \right) - \left( \sum_{k=1}^T (1 - P_d)^{k-1} P_g \left( \sum_{t=1}^k r_t + r_g \right) - \left( 1 - \sum_{k=1}^T (1 - P_d)^{k-1} P_g \right) \sum_{t=1}^T r_t \right).
\label{eq:regret1}
\end{align}
\color{magenta}

\color{black}

Equation \ref{eq:regret1} is grouped into three major brackets. The first group represents $R(\pi^*)$, while the second and third groups refer to $R(\pi)$. 
The second group denotes the expected reward received at each time step from $t=1$ to $T$ when the goal state is reached, and the third group represents the sum of rewards when the goal state is not reached by time $T$.
Equation \ref{eq:regret1} is expanded as follows:
\begin{align}
&= \sum_{t=1}^T r_t^* + r_g - \sum_{k=1}^T (1 - P_d)^{k-1} P_g \sum_{t=1}^k r_t - r_g \sum_{k=1}^T (1 - P_d)^{k-1} P_g - \sum_{t=1}^T r_t + \sum_{k=1}^T (1 - P_d)^{k-1} P_g \sum_{t=1}^T r_t \\
&= \sum_{t=1}^T (r_t^* - r_t) + \sum_{k=1}^T (1 - P_d)^{k-1} P_g \left( \sum_{t=1}^T r_t - \sum_{t=1}^k r_t - r_g \right) + r_g.
\end{align}
To summarize, the regret is as follows:
\begin{align}
\text{Regret}(T) = \sum_{t=1}^T (r_t^* - r_t) + \sum_{k=1}^T (1 - P_d)^{k-1} P_g \left( \sum_{t=1}^T r_t - \sum_{t=1}^k r_t - r_g \right) + r_g.
\end{align}
Now, we differentiate the regret with respect to \( P_d \). The term that is directly affected by \( P_d \) is the second term:
\begin{align}
\frac{d}{d P_d} \text{Regret}(T) = \frac{d}{d P_d} \sum_{k=1}^T (1 - P_d)^{k-1} P_g \left( \sum_{t=1}^T r_t - \sum_{t=1}^k r_t - r_g \right).
\end{align}
First, we differentiate \( (1 - P_d)^{k-1} \):
\begin{align}
\frac{d}{d P_d} (1 - P_d)^{k-1} = -(k-1)(1 - P_d)^{k-2}.
\end{align}
Applying this result to the regret function:
\begin{align}
\frac{d}{d P_d} \text{Regret}(T) = \sum_{k=1}^T -(k-1)(1 - P_d)^{k-2} P_g \left( \sum_{t=1}^T r_t - \sum_{t=1}^k r_t - r_g \right).
\label{eq:regret_derivative}
\end{align}
The term \( -(k-1) \) is always negative except when $k=1$.
The term \( (1 - P_d)^{k-2} \) is always positive.
The term \( P_g \) is also positive.
If the term is \( \left( \sum_{t=1}^T r_t - \sum_{t=1}^k r_t - r_g \right) \) is negative, the derivative of regret in Equation \ref{eq:regret_derivative} is positive.
Now, considering the expression \( \left( \sum_{t=1}^T r_t - \sum_{t=1}^k r_t - r_g \right) \):

\begin{align}
\sum_{t=1}^T r_t - \sum_{t=1}^k r_t - r_g = \sum_{t=k+1}^T r_t - r_g.
\end{align}
Based on Assumption \ref{assumption:goal_reward}, this term is negative.
Therefore, this inequality holds:
\begin{align}
\frac{d}{d P_d} \text{Regret}(T) > 0.
\end{align}
Thus, under the assumption \ref{assumption:goal_reward} reducing \( P_d \) decreases the regret, as desired.
\end{proof}

\section{Hyper-parameters}
Essentially, we used the parameters reported in the original papers of VDN and QMIX without any modifications to ensure fairness in the experiments.
All algorithms employ a utility function that integrates a GRU between two fully connected (FC) layers containing $64$ neurons, with ReLU activations. 
QMIX relies on a non-linear mixing network, where hypernetworks with either one or two layers of $64$ neurons and ELU activations generate the parameters. 
The size of the replay buffer is set to $5000$ and the target network is updated in every $200$ steps.
$\epsilon$ decays from $1.0$ to $0.05$ over $50000$ time steps for easy scenarios in SMAC (\textit{3m} and \textit{2s\_vs\_1sc}), $1$ million time steps in difficult scenarios in SMAC (\textit{MMM2}, \textit{3s5z\_vs\_3s6z}, \textit{6h\_vs\_8z}, \textit{Corridor}), and $500000$ time steps in MPP tasks.
For AELA, two additional hyper-parameters are used: the initial episode length $E_{L_{0}}$ and the window size $w$.
$E_{L_{0}}$ is set to $25\%$ of the original episode length of each scenario ($E_{max}/4$).
We chose the better value for the window size $w$ between 150 and 300 in all scenarios except \textit{6h\_vs\_8z} in SMAC.
As described in the main text, for the \textit{6h\_vs\_8z} scenario, where dead-end states frequently occur, we focused on exploring the initial states of episodes by setting $E_{L_{0}}$ to $15\%$ of the original episode length and setting the window size to 900.

If the episode length increases too quickly relative to the training time, the difference from the traditional method of fixing the episode length at the maximum becomes negligible. Therefore, when setting $w$, we calculated the time step at which the episode length reaches its maximum value when increased at the fastest possible rate and used this calculation to determine $w$.
AELA, like QMIX, collects one episode and updates the $q$-function once. 
During each update, one entropy value is calculated. 
Once $w$ entropy values are collected, a decision is made about whether to increase the episode length. 
When the average episode length is $\bar{L}$, the decision of whether to increase the episode length is made, on average, every $\bar{L} \cdot w$ time steps. 
Thus, during $t$ time steps, the maximum increased number of episode lengths to increase is $\frac{t}{\bar{L} \cdot w}$. 
Let the difference between the initial episode length limit $E_{L_{0}}$ and the maximum episode length $E_{max}$ be $\Delta_{L} = E_{max} - E_{L_{0}}$.
We aim to prevent the episode length from reaching its maximum value before a certain training time $t$.
In other words, the maximum possible increase in episode length during $t$ time steps should be less than or equal to $\Delta_{L}$.
This condition can be written as:
\begin{align}
\frac{t}{\bar{L} \cdot w} \leq \Delta_{L}.
\end{align}
Rearranging for $w$:
\begin{align}
w \geq \frac{t}{\bar{L} \cdot \Delta_{L}}.
\end{align}
Since the exact value of $\bar{L}$ is unknown, it is approximated using the average of the initial episode length limit and the maximum episode length:
\begin{align}
\bar{L} \approx \frac{E_{max} + E_{L_{0}}}{2}.
\end{align}
Using this formula, it is possible to set $w$ to ensure that the episode length does not reach its maximum too quickly during the training period. 
In this study, $t$ was set to approximately 80\% of the total training time, and the value of $w=300$ used in most scenarios in this paper was determined using this method.

\color{black}



\end{document}